\tikzset{
	photon/.style={decorate, decoration={snake}, draw=red},
	electron/.style={draw=blue, postaction={decorate},
		decoration={markings,mark=at position .55 with {\arrow[draw=blue]{>}}}},
	gluon/.style={decorate, draw=magenta,
		decoration={coil,amplitude=4pt, segment length=5pt}},
	sderiv/.style={postaction={decorate},
		decoration={markings,mark=at position .3 with {\arrow{>}}}},
	tderiv/.style={postaction={decorate},
		decoration={markings,mark=at position .7 with {\arrow{<}}}},
	stderiv/.style={postaction={decorate},
		decoration={markings,mark=at position .7 with {\arrow{<}},mark=at position .3 with {\arrow{>}}}}
}
\definecolor{see}{RGB}{67,75,179}
\definecolor{darksee}{RGB}{42,44,148}
\definecolor{honey}{RGB}{232,180,129}
\definecolor{lighthoney}{RGB}{255,254,220}
\definecolor{citecol}{rgb}{0.5,0,0} 
\definecolor{blue1}{RGB}{130,150,209}
\DeclareSymbolFont{bbold}{U}{bbold}{m}{n}
\DeclareSymbolFontAlphabet{\mathbbold}{bbold}
\definecolor{see}{RGB}{67,75,179}
\newcommand{\fA}{\mathfrak{A}}
\newcommand{\Gcal}{\mathcal{G}}  
\newcommand{\Lcal}{\mathcal {L}}
\newcommand{\Bcal}{\mathcal {B}}
\newcommand{\Dcal}{\mathcal{D}}
\newcommand{\Ecal}{\mathcal{E}} 
\newcommand{\Fcal}{\mathcal{F}}
\newcommand{\Ocal}{\mathcal{O}}
\newcommand{\Scal}{\mathcal{S}}
\newcommand{\Pcal}{\mathcal{P}}
\newcommand{\Rcal}{\mathcal{R}}
\newcommand{\Tcal}{\mathcal{T}}
\newcommand{\Ci}{\mathcal{C}^\infty} 
\newcommand{\WF}{\mathrm{WF}}         
\newcommand{\id}{\mathrm{id}}               
\newcommand{\loc}{\mathrm{loc}}
\newcommand{\Floc}{\Fcal_{\loc}}
\newcommand{\mc}{{\mu\mathrm{c}}}
\newcommand{\Diag}{\mathrm{Diag}}
\newcommand{\DIAG}{\mathrm{DIAG}}
\newcommand{\NN}{\mathbb{N}}          
\newcommand{\RR}{\mathbb{R}}           
\newcommand{\CC}{\mathbb{C}}           
\newcommand{\MM}{\mathbb{M}} 	     
\newcommand{\al}{\alpha}
\newcommand{\Ga}{\Gamma}
\newcommand{\la}{\lambda}
\newcommand{\si}{\sigma}
\newcommand{\ph}{\varphi}
\newcommand{\TT}{\Tcal}
\newcommand{\DF}{D_{\mathrm{F}}}
\DeclareMathOperator{\Res}{\mathrm{Res}}
\newcommand{\1}{\mathds{1}}
\newcommand{\HFmu}{H^{\mathrm{F}}_{\sst M}}
\newcommand{\sst}[1]{\scriptscriptstyle{#1}}  
\newcommand{\be}{\begin{equation}}
\newcommand{\ee}{\end{equation}}
\DeclareMathOperator{\supp}{\mathrm{supp}}      
\newcommand{\bGa}{\boldsymbol{\Gamma}}
 \author{Kasia Rejzner}
  \title{Renormalization and periods in perturbative Algebraic Quantum Field Theory}
\date{}
\begin{document}
\sloppy
 \maketitle
  \theoremstyle{plain}
  \newtheorem{df}{Definition}[section]
  \newtheorem{thm}[df]{Theorem}
  \newtheorem{prop}[df]{Proposition}
  \newtheorem{cor}[df]{Corollary}
  \newtheorem{lemma}[df]{Lemma}  
  \theoremstyle{plain}
  \newtheorem*{Main}{Main Theorem}
  \newtheorem*{MainT}{Main Technical Theorem}

  \theoremstyle{definition}
  \newtheorem{rem}[df]{Remark}
  \newtheorem{example}[df]{Example}

 \theoremstyle{definition}
  \newtheorem{ass}{\underline{\textit{Assumption}}}[section]
\begin{abstract}
	In this paper I give an overview of mathematical structures appearing in perturbative algebraic quantum field theory (pAQFT)  in the case of the massless scalar field on Minkowski spacetime.  I also show how these relate to Kontsevich-Zagier periods.
	 Next, I review the pAQFT version of the renormalization group flow and reformulate it in terms of Feynman graphs. This allows me to relate Kontsevich-Zagier periods to numbers appearing in computing the pAQFT $\beta$-function.
\end{abstract}
\tableofcontents
\section{Introduction}
Perturbative AQFT is a mathematically rigorous framework that allows to build models of physically relevant quantum field theories on a large class of Lorentzian manifolds. The basic objects in this framework are functionals on the space of field configurations and renormalization method used is the Epstein-Glaser (EG) renormalization \cite{EG}. The main idea in the EG approach is to reformulate the renormalization problem, using functional analytic tools, as a problem of extending almost homogeneously scaling distributions that are well defined outside some partial diagonals in $\RR^n$. Such an extension is not unique, but it gives rise to a unique ``residue'', understood as an obstruction for the extended distribution to scale almost homogeneously. Physically, such scaling violations are interpreted as contributions to the $\beta$ function.

The main result of this paper is Proposition~\ref{mainprop}, where we show how a large class of residues relevant for computing the $\beta$ function in the pAQFT framework, is related to Kontsevich-Zagier periods. Following \cite{KZ} we define:
\begin{df}
	A period is a complex number whose
	real and imaginary parts are values of absolutely convergent integrals of rational functions
	with rational coefficients, over domains in $\RR^n$ given by polynomial inequalities with
	rational coefficients.
\end{df}
A very accessible introduction to periods and their relation to Feynman integrals can be found for example in \cite{BognerThesis,BogWei07}.

In section 5 we review the main ideas behind the pAQFT renormalization group (following \cite{BDF}) and propose a reformulation in terms of Feynman graphs. The latter allows then to relate the numbers appearing in the computation of the pAQFT $\beta$ function to periods discussed in section 4.
\section{Functionals}
Let $\MM$ be the $D$-dimensional Minkowski spacetime, i.e. $\RR^D$ with the metric \[\eta=\operatorname{diag}(1,\underbrace{-1,\dots,-1}_{D-1})\,.
\] Define the \emph{configuration space} $\Ecal$ of the theory as the space of smooth sections of a vector bundle $E$ over $\MM$, i.e. $\Ecal\doteq \Gamma(E\xrightarrow{\pi}\MM)$. Fixing $E$ specifies the particle content of the model under consideration. In this paper we will consider only the scalar field, i.e. $\Ecal=\Ci(\MM,\RR)$. The field configurations are denoted by $\ph$. For future reference, define $\Dcal\doteq\Ci_c(\MM,\RR)$ the space of smooth compactly supported functions on $\MM$ and more generally, $\Dcal(\Ocal)\doteq\Ci_c(\Ocal,\RR)$, where $\Ocal$ is an open subset of $\RR^n$.

Let $\Ci(\Ecal,\CC)$ denote the space of smooth \cite{Bas64,Neeb} functionals on $\Ecal$. An important class of functionals is provided by the local ones.
\begin{df}
	A functional $F\in\Ci(\Ecal,\CC)$  is called local (an element of $\Fcal_{\loc}$) if for each $\ph\in\Ecal$ there exists $k\in\NN$ such that
	\begin{equation}
	F(\ph)=\int_{\MM} f(j^k_x(\ph))\ ,
	\end{equation}
	where $j^k_x(\ph)$ is the $k$-th jet prolongation of $\ph$ and $f$ is a density-valued function on the jet bundle.
\end{df}
The following definition introduces the notion of spacetime localization of a functional.
\begin{df}
The spacetime support $\supp F$ of a functional $F\in\Ci(\Ecal,\CC)$ is defined by
\begin{multline*}
\label{support}
\supp F\doteq\{ x\in \MM|\forall \text{ neighborhoods }U\text{ of }x\ \exists \ph,\psi\in\Ecal, \supp\,\psi\subset U\,,
\\ \text{ such that }F(\ph+\psi)\not= F(\ph)\}\ .
\end{multline*}	
\end{df}
Derivatives of smooth compactly-supported functionals are distributions with compact support\footnote{Prime always denotes the topological dual, so $\Ecal'(\MM^n)$ is the space of continuous linear maps from $\Ecal(\MM^n)$ to $\RR$ and similarly, $\Ecal'(\MM^n,\CC)$ is the space of continuous linear maps to $\CC$. $\Ecal(\MM^n)$ is always understood as equipped with its natural Fr\'echet topology. It is a standard result in functional analysis that the dual of the space of smooth functions is exactly the space of distributions with compact support.}, i.e.
\[
F^{(n)}(\ph)\in\Ecal'(\MM^n,\CC)\equiv{\Ecal'}^{\sst{\CC}}(\MM^n)\,,\quad\forall\ph\in\Ecal\,,n\in\NN\,.
\]
	If $F$ is local then each $F^{(n)}(\ph)$ is a distribution supported on  the thin diagonal
	\be\label{thindiagonal}
	D_n\doteq\{(x_1,\dots,x_n)\in \MM^n, x_1=\dots=x_n\}\,.
	\ee
Local functionals are important, since they are used to model interactions in perturbative QFT. In the Epstein-Glaser approach, interaction is first restricted to a compact region to avoid the IR problem and subsequently extended by taking the \textit{adiabatic limit}. In this work we are interested only in the UV (i.e. short distance) behavior of the theory, so we leave this last step out. 

One can define various important classes of functionals by formulating conditions on the singularity structure of their derivatives $F^{(n)}(\ph)\in{\Ecal'}^{\sst{\CC}}(\MM^n)$. A notion used in this context is that of a \textit{wavefront set}. For a given distribution $u\in\Dcal'(\RR^n)$, its wavefront set $\WF(u)$ contains information about points in $\RR^n$ at which $u$ is singular, but also about directions in the momentum space (i.e. after the Fourier transform) in which $\widehat{u}(k)$ fails to decay sufficiently fast. In other words, $\WF(u)$ characterizes \textit{singular directions} of $u$. For a pedagogical introduction to WF sets see \cite{BrDang14}. Knowing the WF sets of distributions $u_1$, $u_2$  one can apply the criterion due to H{\"o}rmander \cite{Hoer1} to check if the pointwise product of  $u_1$, $u_2$ is well defined. This motivates using WF sets of functional derivatives $F^{(n)}(\ph)$ to distinguish classes of ``well-behaving'' functionals. One such class is called \textit{microcausal functionals} $\Fcal_\mc$. For the precise definition see \cite{BDF} and \cite{Book} for possible modifications of this notion. For the purpose of this paper, it is enough to know that $\Fcal_{\loc}\subset \Fcal_{\mc}$ and that some important algebraic structures are well defined on this space.
\section{The S-matrix and time-ordered products}
In the next step we introduce the S-matrix. Since we work perturbatively, the S-matrix is understood as a formal power series in the coupling constant $\la$ and a Laurent series in $\hbar$, with coefficients in smooth functionals. First we introduce the time-ordered products.
\begin{df}\label{Tns1}
	Time ordered products are multilinear maps $\TT^n:\Floc^{\otimes n}\rightarrow \Fcal_{\mc}[[\hbar]]$, $n\in\NN$, satisfying:
	\begin{enumerate}
		\item{\bf Causal factorisation property}
		\[\TT^n(F_1,\dots,F_n)=\TT^k(F_1,\dots,F_k)\star\TT^{n-k}(F_{k+1},\dots,F_n)\,,\]
		if the supports $\supp F_i$, $i=1,\dots,k$ of the first $k$ entries do not intersect the past of the supports $\supp F_j$, $j=k+1,\dots,n$ of the last $n-k$ entries. Here $\star$ is the operator product of the quantum theory defined by
		\[
		(F\star G)(\ph)\doteq e^{\hbar\left\langle\Delta_+,\frac{\delta^2}{\delta\ph\delta\ph'}\right\rangle}F(\ph)G(\ph')|_{\ph'=\ph}\,,
		\]
		where $\Delta_+$ is the Wightman 2-point function.
		\item $\TT^0=1$, $\TT^1=\id$.
		\item {\bf Symmetry}: For a purely bosonic theory $\TT^n$s are symmetric in their arguments. If the fermions are present,  $\TT^n$s are graded-symmetric.
		
		\item {\bf Field independence}: $\TT^{n}(F_1,\ldots,F_n)$, as a functional on $\Ecal$, depends on $\ph$ only via the functional derivatives of $F_1,\ldots,F_n$, i.e.\label{FieldIndep}
		\[
		\frac{\delta}{\delta\ph}\TT^{n}(F_1,\ldots,F_n)=\sum_{i=1}^n\TT^{n}\left(F_1,\dots,\frac{\delta F_i}{\delta\ph},\dots,F_n\right)
		\]
		\item {\bf $\ph$-Locality}: $\TT^{n}(F_1,\ldots,F_n)=\TT^{n}(F_1^{[N]},\ldots,F_n^{[N]})+\Ocal(\hbar^N)$, where  $F_i^{[N]}$ is the Taylor series expansion of the functional $F_i$ up to the $N$-th order.\label{PhLoc}
		\item {\bf Poincar\'e invariance}. Let $\alpha\in\Pcal^\uparrow_+$ (the proper ortochronous Poincar\'e group). We define $\sigma_\al(\ph)(x)\doteq \ph(\al^{-1}x)$ for $\ph\in\Ecal$, $x\in\MM$ and define the action of $\alpha\in\Pcal^\uparrow_+$ on functionals using $\sigma_\al(F)\doteq F( \sigma_\al(\ph))$. We require $\sigma_\al\circ \Tcal^{n}\circ (\sigma_\al^{-1})^{\otimes n}=\Tcal^n$.
	\end{enumerate}
\end{df}
We refer to these conditions as the Epstein-Glaser (EG) axioms.
\begin{df}
The formal S-matrix is a map from $\Fcal_{\loc}$ to $\Fcal_{\mc}[[\la]]((\hbar))$ defined as
 \begin{equation}\label{S}
\Scal(\la F)=\sum_{n=0}^\infty \frac{(\la i)^n}{n!\hbar^n}\TT_n(F^{\otimes n})\, ,
\end{equation}
\end{df}
Let $(\Fcal_\loc)^{\otimes n}_{\mathrm{pds}}$ denote the subset of $\Fcal_\loc^{\otimes n}$ consisting of functionals with pairwise disjoint supports. On such functionals one can define the $n$-fold time-ordered product to be
\be\label{nonrenorm}
\TT^n(F_1,\dots,F_n)=m\circ e^{\hbar\sum_{i<j}D_{\mathrm{F}}^{ij}}(F_1\otimes\dots\otimes F_n)\,,
\ee
where $D_{\mathrm{F}}^{ij}\doteq \langle\Delta^{\mathrm{F}},\frac{\delta^2}{\delta\ph_i\delta\ph_j}\rangle$, $m$ denotes the pointwise multiplication and $\Delta^{\mathrm{F}}$ is the Feynman propagator of the free scalar field theory on $\MM$. Unfortunately, this definition doesn't trivially extend to arbitrary local functionals, due to singularities of the Feynman propagator. Instead, one has to use more sophisticated analytical tools, which we will review in the next section. We will refer to \eqref{nonrenorm} as the \textit{non-renormalized} $n$-fold time-ordered product and the problem of extending $\Tcal^n$ to arbitrary local functional is referred to as \textit{the renormalization problem}.

To organize the combinatorics present in the construction of time-ordered products, it is convenient to write them in terms of Feynman graphs. To see how this comes about, we use the identity
\be\label{expDij}
e^{\hbar\sum_{i<j}D_{\mathrm{F}}^{ij}}=\prod_{i<j}\sum_{l_{ij}=0}^{\infty}\frac{\left(\hbar\, D_{\mathrm{F}}^{ij}\right)^{l_{ij}}}{l_{ij}!}
\ee
to obtain the expansion 
\[
\Tcal^n=\sum_{\Gamma\in\Gcal_n}\Tcal^\Gamma\,,
\] where $\Gcal_n$ is the set of all graphs with $n$ vertices and no tadpoles (i.e. no loops in the graph-theoretic sense). Let $E(\Gamma)$ denote the set of edges and $V(\Gamma)$ the set of vertices of the graph $\Gamma$. Contributions from particular graphs are given by
\be\label{GraphDO}
\Tcal^{\Gamma}=\frac{1}{\textrm{Sym}(\Gamma)}m\circ\langle t^{\Gamma},\delta_{\Gamma}\rangle\,,
\ee
with
\[\delta_{\Gamma}=\frac{\delta^{2\,|E(\Gamma)|}}{\prod_{i\in V(\Gamma)}\prod_{e:i\in\partial e}\delta\ph_i(x_{e,i})}\]
and
\be\label{SGamma}
t^{\Gamma}=\prod_{e\in E(\Gamma)}\hbar\Delta^{\mathrm{F}}(x_{e,i},i\in\partial e)
\ee
The symmetry factor $\textrm{Sym}$ is the number of possible permutations of lines joining 
the same two vertices, $\textrm{Sym}(\Gamma)=\prod_{i<j}l_{ij}!$. 

Note that the map $\delta_\Gamma$ applied to $F\in\mathcal{F}_{\mathrm{loc}}^{\otimes n}$ yields, at any $n$-tuple of field configurations $(\varphi_1,\dots,\varphi_n)$, a compactly supported distribution in the variables $x_{e,i},i\in\partial e, e\in E(\Gamma)$ with support on the partial diagonal 
\[
\Diag_{\Gamma}=\{x_{e,i}=x_{f,i},i\in\partial e\cap\partial f, e,f\in E(\Gamma)\}\subset \MM^{2|E(\Ga)|}\,.
\]
This partial diagonal can be parametrized using the centre of mass coordinates
\[
z_v\doteq \frac{1}{\textrm{valence}(v)}\sum_{e:v\in\partial e} x_{e,v}\,,
\]
assigned to each vertex. The remaining relative coordinates are $x_{e,v}^{\text{rel}}=x_{e,v}-z_v$, where $v\in V(\Ga)$, $e\in E(\Ga)$ and $v\in\partial e$. Obviously, we have $\sum_{e|v\in\partial e} x_{e,v}^{\text{rel}}=0$ for all $v\in V(\Ga)$, so in fact $\Diag_{\Gamma}$ is parametrized by $|V(\Gamma)|-1$ independent variables. In this parametrization $\delta_\Gamma F$ can be written as a finite sum
\[
\delta_\Gamma F=\sum_{\beta}f^\beta\partial_\beta\delta_{\textrm{rel}}\,,
\]
where $\beta\in\NN_0^{D(|V(\Gamma)|-1)}$, each $f^{\beta}(\ph_1,...,\ph_n)$ is a test function on $\Diag_{\Gamma}$ and $\delta_{\textrm{rel}}$ is the Dirac delta distribution in relative coordinates, i.e. $\delta_{\textrm{rel}}(g)=g(0,\ldots,0)$, where $g$ is a function of $(x_{e,v}^{\textrm{rel}},v\in V(\Ga), e\in E(\Ga))$.

Let $Y_\Gamma$ denote the vector space spanned by derivatives of the Dirac delta distributions $\partial_\beta\delta_{\textrm{rel}}$, where $\beta\in\NN_0^{D(|V(\Gamma)|-1)}$ and let $\mathcal{D}(\Diag_\Gamma, Y_\Gamma)$ denote the graded space of test functions on $\Diag_\Gamma$ with values in $Y_\Gamma$. With this notation we have $\delta_\Gamma F\in\mathcal{D}(\Diag_\Gamma, Y_\Gamma)$ and if $F\in (\Fcal_\loc)^{\otimes n}_{\mathrm{pds}}$, then $\delta_\Gamma F$ is supported on  $\Diag_{\Gamma}\setminus\DIAG$, where $\DIAG$ is the large diagonal:
\[
\DIAG=\left\{ z\in\Diag_{\Gamma}|\,\exists v,w\in V(\Gamma),v\neq w:\, z_{v}=z_{w}\right\} \,.
\]
We can therefore write \eqref{GraphDO}  in the form
\[
\frac{1}{\textrm{Sym}(\Gamma)}\langle t^{\Gamma},\delta_{\Gamma}\rangle=\sum_{\textrm{finite}}\left<f^\beta\partial_\beta\delta_{\textrm{rel}},t^\Gamma\right>
\]
where $t^{\Gamma}$ is written in terms of centre of mass and relative coordinates. To see that this expression is well defined, note that we can move all the partial derivatives $\partial_\beta$ to $t^{\Ga}$ by formal partial integration. Then the contraction with $\delta_{\textrm{rel}}$ is just the pullback through the diagonal map $\rho_{\Ga}:\Diag_\Gamma\rightarrow\MM^{2|E(\Ga)|}$ by
\[
(\rho_{\Ga}(z))_{e,v}=z_v\,\quad\mathrm{if}\,v\in\partial e\,.
\]
The pullback $\rho_{\Ga}^*$ of each $ t^{\Ga}_\beta\doteq\partial_\beta t^{\Ga}$ is a well defined distribution on
$\Diag_\Ga\backslash\DIAG$, so  \eqref{GraphDO} makes sense if $F\in (\Fcal_\loc)^{\otimes n}_{\mathrm{pds}}$.

The renormalization problem to extend $\TT^n$'s to maps on the full $\Fcal_\loc^{\otimes n}$ is now reduced to extending distributions $\rho_{\Ga}^*t_{\beta}^\Gamma$ to the diagonal.

In this and the next section we will consider the simplest situation, where the free theory is the free massless scalar field and the possible interactions are local functionals $F_1,\dots,F_n$ that depend on the field itself but not on its derivatives. Without the loss of generality, we can assume them to be monomials, i.e. of the form
\[
F(\ph)=\int f(x)\ph(x)^ld^Dx\,,
\]
where $f\in\Dcal$, $l\in\NN$. Such a functional can be graphically represented as a vertex of valence $l$, decorated by the test function $f$. 

The distributions we need to extend are then $u^\Gamma=\rho_{\Ga}^*t^{\Gamma}$, where $t^{\Gamma}$ is given by \eqref{SGamma}. We can write the explicit expression for $u^\Gamma$ using the following rules:
\begin{enumerate}
	\item Choose a vertex of $\Gamma$ and label it as $x_0=0$. Label the remaining vertices with variables $x_1,\dots,x_{n}$, where $n=|V(\Gamma)|-1$.
	\item Assign the Feynman propagator $\Delta^{\mathrm{F}}(x_i,x_j)$ to each edge $e\in E(\Gamma)$, where $x_i,x_j\in\partial e$.
\end{enumerate}
Because of the translational symmetry, the Feynman propagator $\Delta^{\mathrm{F}}(x,y)$  depends only on the difference $x-y$. Explicitly, it is given by
\[
\Delta^{\mathrm{F}}(x,y)=(-1)^{\frac{D}{2}-1}\frac{\bGa(\frac{D}{2}-1)}{4\pi^{\frac{D}{2}}}\lim_{\epsilon\rightarrow 0^+}\frac{1}{((x-y)^2-i\epsilon)^{\frac{D}{2}-1}}\equiv \frac{k_D}{((x-y)^2-i0)^{\frac{D}{2}-1}}\,,
\] 
where $(x-y)^2\doteq \eta(x-y,x-y)$ is the square with respect to the Minkowski metric and $\bGa$ denotes the Gamma function. We use the bold symbol to distinguish this from the notation we use for graphs. It follows now that
\be\label{uGamma}
u^\Gamma(x_1,\dots,x_{n-1})=\frac{k_D^{|E(\Gamma)|}}{\prod_{e\in E(\Gamma)}((x_{s(e)}-x_{f(e)})^2-i0)^{\frac{D}{2}-1}}\,,
\ee
where $\{x_{s(e)},x_{f(e)}\}=\partial e$ is the pair of vertices that constitute the boundary of an edge $e$ and the order of these vertices is irrelevant. 
\begin{example}
	Consider the following examples:
	\begin{enumerate}
		\item For the fish graph: $u^\Gamma(x)= \frac{k_D^2}{(x^2-i0)^{D-2}}$,
		\item For the triangle graph: \[u^\Gamma(x,y)=\frac{k_D^3}{(x^2-i0)^{\frac{D}{2}-1}(y^2-i0)^{\frac{D}{2}-1}((x-y)^2-i0)^{\frac{D}{2}-1}}\,.\]
	\end{enumerate}
\end{example}
We have seen how to reduce the renormalization problem to extension of distributions. The construction of $\Tcal^n$s proceeds inductively. Given renormalized time-ordered products of order $k<n$, we can use the causal factorisation property  to fix the time-ordered products at order $n$ up to the thin diagonal $D_n$ (see \eqref{thindiagonal}). On the level of graphs it means that all the distributions $u^\gamma$ corresponding to proper subgraphs $\gamma\subset\Gamma$ have been constructed and substituted into $u^\Gamma$. The renormalization problem for $u^\Gamma$ is now the extension of a distribution defined everywhere outside the \textit{thin diagonal of the graph $\Gamma$} understood as the subset of $\Diag_\Gamma$ with all the variables equal. Because of the translation symmetry, this is in fact extension problem for a distribution defined everywhere outside the origin.
\section{Distributional residues and periods}\label{resid}
The framework of pAQFT is different from the one of Connes and Kreimer in two fundamental ways: one works in position rather than momentum space and the metric of the underlying spacetime has Lorentzian rather than Euclidean signature. The latter is the reason for invoking Epstein-Glaser causal approach to renormalization, as outlined in the previous section. 

There has been a lot of work done concerning periods in position space approach to renormalization. The most recent comprehensive review has been given in \cite{NST}, while for historical remarks on the development of the subject, it is worth to look up \cite{Tod16}. A very detailed analysis of renormalization of Feynman integrals and its relation to periods and motives has been done in the series of papers \cite{CeMar12,CeMar12b,CeMar13}. However, the computations performed in these works are done in Euclidean signature. Another noteworthy work, focusing on relations between Epstein-Glaser renormalization and ``wonderful compactifications'' is \cite{BBK09}.

There are some serious technical difficulties arising when changing the signature to Lorentzian. In the present paper we show how some standard methods used in Euclidean setting can, nevertheless, be applied also to the Lorentzian case.

Before coming to the main result of this paper, let us recall some basic facts about the problem of extension of almost homogeneous distributions \cite{Ste71,BF0,HW02,JMGB03,BDF,NST}.
\begin{df}
We say that a distribution $u\in\Dcal'(\RR^N\setminus\{0\})$ scales almost homogeneously, if $(\rho\frac{d}{d\rho})^{k+1}\rho^\alpha u(\rho .)=0$ for some $k\in\NN_0$ (called scaling order), $\alpha\in \RR$ (called scaling degree).
\end{df}
The almost homogeneous scaling relation can also be written in terms of the Euler operator $\mathscr{E}=\sum_{i=1}^d x^i\frac{\partial}{\partial x^i}$, namely a distribution with scaling degree $\al$ and order $k$ satisfies
\[
(\mathscr{E}+\al)^{k+1}u=0\,,
\]
while $(\mathscr{E}+\al)^{k}u\neq0$.
\begin{example}
	For a graph $\Gamma$ with $n$ vertices the distribution $u\equiv u^\Gamma$ that we need to extend belongs to $\Dcal'(\RR^{N}\setminus\{0\})$, where $N=(n-1)D$ and $D$ is the dimension of $\MM$.
\end{example}
The following result  was proven in \cite[Proposition 1]{H} (see also \cite[section 4.4]{NST}):
\begin{prop}
 Let $u$ be a (Lorentz invariant) almost homogeneously scaling distribution with degree $\al=N+\NN_0$, then there  exists a non-unique (Lorentz invariant) extension $\bar{u}\in\Dcal'(\RR^N)$ of $u$ and 
 \[
 \left(\rho\frac{d}{d\rho}\right)^{k+1}\rho^\al \bar{u}(\rho .)\Big|_{\rho=1}=(\mathscr{E}+\al)^{k+1}\bar{u}=\sum_{|\beta|=\al-N}c_\beta\partial^\beta\delta\,,
 \]
 where $\beta\in\NN_0^N$ is a multiindex. 
 \end{prop}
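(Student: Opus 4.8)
The plan is to run the Epstein--Glaser ``extension by subtraction'' scheme and then read off the residue as the obstruction to homogeneity of the extended distribution; the first equality in the statement is the routine chain-rule identity already recorded in the excerpt (valid verbatim for $\bar u$ on $\RR^N$), so it is enough to analyse $(\mathscr{E}+\al)^{k+1}\bar u$. Write $\varrho\doteq\al-N\in\NN_0$. Since $u$ has scaling degree $\al$, the pairing $\langle u,g\rangle$ is well defined for every $g\in\Dcal(\RR^N)$ vanishing at $0$ to order $>\varrho$ (estimate the dyadic pieces $g\cdot\psi(2^{j}\,\cdot)$ and sum the geometric series, cf.\ \cite{Ste71,BF0,NST}). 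Fixing $\chi\in\Dcal(\RR^N)$ with $\chi\equiv1$ near $0$, the Steinmann projector $Wg\doteq g-\chi\sum_{|\beta|\le\varrho}\frac{x^{\beta}}{\beta!}(\pa^{\beta}g)(0)$ sends $\Dcal(\RR^N)$ into functions vanishing to order $\varrho+1$ at $0$, so $\bar u\doteq u\circ W$ is a distribution on $\RR^N$; a scaling computation shows it still has scaling degree $\al$, and $\bar u=u$ away from $0$ because $Wg=g$ whenever $0\notin\supp g$. Two extensions differ by a distribution supported at $\{0\}$, which — since $\mathscr{E}+\al$ does not increase scaling degree and $\operatorname{sd}(\pa^{\beta}\delta)=N+|\beta|$ — must lie in the finite-dimensional, nonzero space $V\doteq\operatorname{span}\{\pa^{\beta}\delta:|\beta|\le\varrho\}$; hence non-uniqueness.

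\emph{Lorentz invariance.} The Lorentz group $L$ acts affinely on the nonempty finite-dimensional affine space of extensions — the pushforward of an extension of the $L$-invariant $u$ along any $\La\in L$ is again such an extension — with linear part the representation of $L$ on $V$. An $L$-fixed point is exactly an $L$-invariant extension, and it exists because the obstruction lies in the first cohomology of the Lorentz Lie algebra $\mathfrak{so}(1,D-1)$ with values in the finite-dimensional module $V$, which vanishes by Whitehead's lemma since $\mathfrak{so}(1,D-1)$ is semisimple for $D\ge3$ (the case $D=2$ being elementary); cf.\ \cite{H,NST}. Fix such a $\bar u$ from now on.

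\emph{The residue and its homogeneity degree.} Since $u$ scales almost homogeneously of order $k$, $(\mathscr{E}+\al)^{k+1}u=0$ on $\RR^N\setminus\{0\}$, so $r\doteq(\mathscr{E}+\al)^{k+1}\bar u$ is supported at $\{0\}$, and as $\mathscr{E}+\al$ does not increase scaling degree, $r\in V$, i.e.\ $r=\sum_{|\beta|\le\varrho}c_\beta\pa^{\beta}\delta$. To see that only $|\beta|=\varrho$ contributes I would show that $\bar u$ can be chosen with scaling order exactly $k+1$, i.e.\ $(\mathscr{E}+\al)\,r=0$; since $(\mathscr{E}+\al)\pa^{\beta}\delta=(\varrho-|\beta|)\pa^{\beta}\delta$, this annihilation forces $c_\beta=0$ unless $|\beta|=\varrho$. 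I would argue this by induction on $k$. For $k=0$ ($u$ homogeneous) put $g_{\si}\doteq\si^{\al}\bar u(\si\,\cdot)-\bar u$; it vanishes off $0$, so $g_{\si}=\sum_{|\beta|\le\varrho}a_\beta(\si)\pa^{\beta}\delta$ with $a_\beta(1)=0$, and the cocycle relation $g_{\si\tau}=\si^{\al}g_\tau(\si\,\cdot)+g_{\si}$ gives $a_\beta(\si\tau)=\si^{\varrho-|\beta|}a_\beta(\tau)+a_\beta(\si)$, whose solutions are $a_\beta(\si)=c_\beta(\si^{\varrho-|\beta|}-1)$ for $|\beta|\ne\varrho$ and $a_\beta(\si)=c_\beta\log\si$ for $|\beta|=\varrho$; subtracting $\sum_{|\beta|<\varrho}c_\beta\pa^{\beta}\delta$ from $\bar u$ removes the coboundary part (and preserves $L$-invariance, as $L$ respects $|\beta|$), leaving $(\mathscr{E}+\al)\bar u=\sum_{|\beta|=\varrho}c_\beta\pa^{\beta}\delta$, which is $(\mathscr{E}+\al)$-closed. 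For the inductive step, apply the hypothesis to $u_1\doteq(\mathscr{E}+\al)u$ — almost homogeneous of degree $\al$ and order $k-1$ — obtaining a good $L$-invariant extension $\bar u_1$; for any $L$-invariant scaling-degree-$\al$ extension $\tilde u$ of $u$ the difference $(\mathscr{E}+\al)\tilde u-\bar u_1$ lies in $V$, and correcting $\tilde u$ by a suitable $L$-invariant element of $\operatorname{span}\{\pa^{\ga}\delta:|\ga|<\varrho\}$ yields $\bar u$ with $(\mathscr{E}+\al)\bar u=\bar u_1+\sum_{|\beta|=\varrho}e_\beta\pa^{\beta}\delta$; since $(\mathscr{E}+\al)^{k}$ annihilates $\pa^{\beta}\delta$ for $|\beta|=\varrho$ when $k\ge1$, we get $(\mathscr{E}+\al)^{k+1}\bar u=(\mathscr{E}+\al)^{k}\bar u_1=\sum_{|\beta|=\varrho}c_\beta\pa^{\beta}\delta$.

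\emph{Main obstacle.} The routine parts are the existence of the extension and the fact that its residue is a $\delta$-combination at $\{0\}$. The hard points are: obtaining an $L$-invariant extension in Lorentzian signature, where the non-compactness of $L$ blocks naive averaging and one is forced to use the vanishing cohomology above; and, above all, the cocycle analysis that confines the residue to the single degree $|\beta|=\al-N$ while staying compatible both with that $L$-invariance and with the scaling-order bookkeeping propagated through the induction on $k$.
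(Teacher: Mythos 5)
Your argument is correct, but it follows a genuinely different route from the one the paper relies on. The paper does not reprove this statement: it quotes \cite{H} (Prop.~1) and \cite{NST} (Sec.~4.4) and then reviews that construction, which is explicitly ``radial'': one writes $u(rx)=\sum_{m\le k} r^{-l}\frac{(\log r)^m}{m!}v_m(x)$, restricts the $v_m$ to a surface $\Sigma$ transverse to the dilation orbits, reduces the problem to extending $\theta(r)r^{N-1-l}(\log r)^m$ in the single radial variable (unique almost homogeneous extension at $\varepsilon$-shifted non-integer power, followed by minimal subtraction of the pole), and thereby obtains the explicit coefficient formula $c_\beta=(-1)^{\al-N}\frac{1}{\beta!}\int_\Sigma\hat x^\beta v_k(\hat x)\,d\sigma(\hat x)$ together with its independence of the extension and of $\Sigma$; that explicit formula is what the paper actually uses later, in Proposition~\ref{mainprop} and the examples. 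You instead prove existence by Steinmann/Taylor subtraction ($\bar u=u\circ W$), obtain Lorentz invariance abstractly from the vanishing of $H^1$ of $\mathfrak{so}(1,D-1)$ with values in the finite-dimensional space $V$ of delta derivatives, and confine the residue to $|\beta|=\al-N$ via the scaling-cocycle functional equation plus induction in the scaling order: in effect you show the extension can be chosen almost homogeneous of order $k+1$, so the violation is $(\mathscr{E}+\al)$-closed and must lie in the span of $\pa^\beta\delta$ with $|\beta|=\al-N$. Your route buys a cleaner, regularization-free existence proof and makes the Lorentz-invariance mechanism explicit (the paper only asserts it parenthetically), but it does not produce the computable surface-integral expression for $c_\beta$ that the rest of the paper depends on. Two routine points to record in a full write-up: the solution $a_\beta(\si)=c_\beta\log\si$ at $|\beta|=\al-N$ needs continuity (or just differentiability at $\si=1$) of $\si\mapsto\si^{\al}\bar u(\si\,\cdot)$, which holds because this map is weakly smooth in $\si>0$; and the Whitehead step requires passing from the Lie-algebra fixed point to the connected group, with the remaining components of the full Lorentz group, if desired, handled by finite averaging.
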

In the proof of the above proposition provided in \cite{H}, the coefficients $c^{\alpha}$ are computed by integrating certain (closed) distributional forms over a closed codimension 1 surface enclosing the origin. We will now review the construction of these forms and it will become clear that these do not depend on the choice of the extension. Moreover, their closeness is the reason why $c^{\alpha}$s do not depend on the choice of the integration surface and hence the homogeneous differential operator
 \be\label{homdiffop}
\sum_{|\beta|=\al-N}c_\beta\partial^\beta
\ee
doesn't depend on the choice of the extension $\bar{u}$. This fact has also been highlighted in the discussion following formula (4.21) in \cite[section 4.4]{NST}).

 We will call \eqref{homdiffop} the residue of $u$ and denote it by $\operatorname{Res}(u)$, so that
 \[
 (\mathscr{E}+\al)^{k+1}\bar{u}=\operatorname{Res}(u)\delta\,.
 \]
Coefficients of the differential operator $\operatorname{Res}(u)$ can be explicitly computed using the construction of $\bar{u}$ proposed in \cite[eq.~(186)]{H} and \cite[Theorem 4.8]{NST}. Let us outline the main ideas behind this construction. First, note that the almost homogeneous scaling implies that the distributional kernel of $u$ can be written as \cite[eq.~(172)]{H}, \cite[eq.~(3.12)]{NST})
\be\label{expansion}
u(rx)=\sum_{m=0}^{k}
  r^{-l}\frac{(\log r)^m}{m!}v_m(x)\,\quad r>0\,,
\ee
where $v_m=(\mathscr{E}+\al)^mu$.
Let $\left<u,f\right>$ denote the dual pairing between the distribution $u$ and the test function $f\in\Dcal(\RR^{N}\setminus\{0\})$. This pairing is usually realized as the integral
\be\label{pairing}
\left<u,f\right>=\int_{\MM^N}u(x)f(x)d^{N}x\,.
\ee
We rewrite this integral using the representation \eqref{expansion}. First, choose a compact $N-1$ dimensional hypersurface around the origin, homoeomorphic to the (Euclidean) sphere $S^{N-1}$ that intersects each orbit of the scaling transformation $x\mapsto \mu x$ exactly once. Note that the map $\RR_+\times\Sigma\ni(r,\hat{x})\mapsto r\hat{x}\in\RR^N\setminus\{0\}$ is a diffeomorphism, since the surface $\Sigma$ is transverse to
the orbits of dilations in $\RR^{N}$. 

Using microlocal analysis techniques \cite{Hoer1} one can show that distributions $v_m$ appearing in \eqref{expansion} have well defined restrictions to $\Sigma$ (see \cite{H}, section 3.3, after eq.~(173)). Denote points on $\Sigma$ by $\hat{x}$ and write the restriction of $v_m$ as $v_m(\hat{x})$. Next, define for $r>0$ the following space
\[
\Sigma_r\doteq\{r\hat{x}\in\RR^{N}|\hat{x}\in\Sigma\}\,.
\]
Denote the natural inclusion of $\Sigma_r$ into $\RR^N$ by $i_r$. One introduces a $(N-1)$-form $\Omega$ on $\RR^N$ by
\[
\Omega(x)=\sum_{a=1}^{N}(-1)^{a-1}x_{a} dx_{1}\wedge\dots\wedge \widehat{dx_{a}}\wedge\dots\wedge dx_{N}\,,
\] 
where $x_{a}$ are components of $x\in\RR^N$. The caret symbol $\widehat{\,}$ means that the corresponding factor is omitted. We can now write
\[
d^{N}x=\frac{dr}{r}\wedge i_r^*\Omega\,.
\]
Let $\rho_\Sigma:\RR^{N}\setminus\{0\}\rightarrow \RR_+$ denote the smooth function defined by the condition
\[
\frac{x}{\rho_{\Sigma}(x)}\in\Sigma\,.
\]
We obtain a measure on $\Sigma$ by setting
\[
d\sigma(\hat{x})=\rho_{\Sigma}(x)^{-N}\Omega(x)\,,
\]
and express the pairing \eqref{pairing} as
\be\label{pairing2}
\left<u,f\right>=\int_{-\infty}^\infty \sum_{m=0}^{k}\theta(r)r^{N-1-l}\frac{(\log r)^m}{m!}\left(\int_{\Sigma}v_m(\hat{x})f(r\hat{x})d\sigma(\hat{x})\right)dr\,,
\ee
where $\theta$ denotes the Heaviside step function. 
Denote $F(r)\doteq \int_{\Sigma}v_m(\hat{x})f(r\hat{x})d\sigma(\hat{x})$.
Formula \eqref{pairing2} makes sense, since the support of the test function $f$ is bounded away from the origin
in $\RR^N$ and hence $F(r)$
is a test function on $\RR_+$ (i.e. smooth compactly supported), whose support is
 bounded away from $r=0$. If we want $f$ to be an arbitrary test function, then $F(r)$ vanishes for
 sufficiently large $r$, but does not vanish near $r = 0$ \cite[discussion following eq.~(184)]{H}. 
 
 The renormalization problem has therefore been reduced to extension of the distribution $\theta(r)r^{N-1-l}(\log r)^m$ on $\RR$. This is done by various methods, see for example \cite{Sp,FJL92,GBL,NST,GGV14}. The idea that we are going to follow here (proposed by \cite{JMGB03} based on the ideas of \cite{OK89,Pra99}) is to consider first the extension of the distribution $\theta(r)r^{N-1-l+\varepsilon}(\log r)^m$ for a complex, non integer $N-1-l+\varepsilon$. If we require the almost homogeneous scaling, then the extension exists and is unique. Next, we expand the resulting extended distribution in $\varepsilon$ and subtract the pole part.

 Let us come back to our original extension problem for $u\in\Dcal'(\RR^N\setminus\{0\})$. It is well known in the literature on differential renormalization
(see e.g. \cite[eq.~(186)]{H} or \cite[Thm.~4.8]{NST}) 
that an extension $\bar{u}$ of an almost homogeneously scaling distribution $u$ of order $k$ and degree $\al$ to an everywhere-defined distribution can be obtained by setting
\begin{multline*}
\left<\bar{u},f\right>\doteq\lim_{\varepsilon\rightarrow0} \left(\int_0^\infty\!\!\!\int_{\Sigma} \overline{r^{\varepsilon}u(r\hat{x})}^{\,\textrm{uhe}}f(r\hat{x})d\sigma(\hat{x})dr\right.\\-\left.\sum_{m=0}^{k}\frac{(-1)^{m+\al-N}}{\varepsilon^{m+1}}\sum_{|\beta|=\al-N}\frac{1}{\beta!}\int_0^\infty\!\!\!\int_{\Sigma}v_m(\hat{x})f(r\hat{x})\partial^{\beta}\delta(r\hat{x})d\sigma(\hat{x})dr\right)\,,
\end{multline*}
where $\overline{\phantom{u}\!\!\!.}^{\,\textrm{uhe}}$ denotes the \textit{unique almost homogeneous extension}, $\beta\in\NN_0^{N}$ is a multiindex, $\beta!\equiv \beta_1!\dots\beta_N!$ and $\partial^{\beta}\doteq \partial_{x_1}^{\beta_1}\dots\partial_{x_N}^{\beta_N}$.

We are now ready to compute the almost homogeneous scaling violation for the extension $\bar{u}$. The coefficients $c_\beta$ of $\operatorname{Res}(u)$ in formula \eqref{homdiffop} are obtained from (see e.g. \cite[eq.~(92)]{H})
\[
c_\beta\doteq (-1)^{\al-N}\frac{1}{\beta!}\int_{\Sigma}\hat{x}^{\beta}v_m(\hat{x})d\sigma(\hat{x})
\]
that manifestly doesn't depend on the choice of the extension, but only on $u$. Note that $c_\beta$ does not depend on the choice of $\Sigma$ because the integrand is a (distributional) closed form (see \cite[eq.~(210)]{H} for the proof of closedness).

As a special case we can consider a distribution with scaling degree $\al=N$ and scaling order 0. In this case the  residue is given in terms of a complex number
\be\label{c0}
\operatorname{Res}(u)=c_0=\int_{\Sigma}u(\hat{x})d\sigma(\hat{x})\,.
\ee
\begin{df}
For a graph $\Gamma$ with $n$ vertices and no derivatives decorating the edges, the scaling degree of the distribution $u^\Ga$ is given by the formula
\[
\al_{\Gamma}=(D-2)|E(\Gamma)|\,.
\]
\end{df}
\begin{df}
We define the divergence degree of a graph $\Gamma$ by
\[
\omega_{\Gamma}=\al_{\Gamma}-(|V(\Gamma)|-1)D\,.
\]
A graph $\Gamma$ is called \textit{superficially divergent} if $\omega_{\Gamma}\geq 0$.
\end{df}
Hence graphs with $\al_\Ga=N$ are characterized by the condition
\be\label{cond1}
(D-2)|E(\Gamma)|=(|V(\Gamma)|-1)D\,.
\ee
Note that the loop number of a graph (the first Betti number) is given by $h_1=|E(\Gamma)|-|V(\Gamma)|+1$, so the above condition can be also expressed as
\[
|E(\Gamma)|=\frac{D}{2}h_1\,.
\]
In four dimensions ($D=4$) this reduces to $|E(\Gamma)|=2h_1$.  If $\Gamma$ satisfies \eqref{cond1} and has no superficially  divergent subgraphs (here a subgraph $\gamma\subset\Gamma$ is specified by choosing a subset of vertices of $\Gamma$ and taking all the edges connecting these), then it has scaling degree $\al_{\Gamma}=N$ (so the divergence degree vanishes) and scaling order $k_{\Gamma}=0$. Such graphs coincide with \textit{primitive graphs} in the Connes-Kreimer approach, if we restrict to $D=4$ and fix the interaction.
\begin{rem}
	The class of primitive graphs in the Epstein-Glaser Hopf algebra \cite{Pint,Kai,GBL,DFKR} differs from the class of   primitive graphs in the Connes-Kreimer approach. As an example consider the two vertex graph, which has $|E(\Gamma)|=4$ and $h_1=3$. This graph is primitive in the Epstein-Glaser Hopf algebra, but not primitive in the Connes-Kreimer approach.
\end{rem}
Consider a graph $\Gamma$ with $|E(\Gamma)|=\frac{D}{2}h_1$ and no superficially divergent subgraphs. Let 
$\Delta$ be the simplex defined by $\sum_{e\in E(\Gamma)}\al_e=1$ and $\alpha_e> 0$. We introduce the measure $\mu(\vec{\al})\doteq\delta(1-\sum_{e\in E(\Gamma)}\al_e)\prod_{e\in E(\Gamma)}\al_e^{\frac{D}{2}-2}d\al_e$ on $\Delta$. Let
\[
\hat{\Psi}_\Gamma(\vec{\al})=\sum_{T\ \mathrm{spanning}\atop \mathrm{tree}}\prod_{e\in T}\al_e
\]
be the dual graph polynomial (see e.g. \cite{BEK06,BognerThesis,Wei06,IZ}). We define
\be\label{period}
P_\Gamma\doteq\int_{\Delta}\frac{\mu(\vec{\al})}{(\hat{\Psi}_\Gamma(\vec{\al}))^{D/2}}\,.
\ee
If $P_\Gamma$ converges absolutely, then it defines a \textit{real period} of the graph $\Gamma$ in the sense of Definition~36 of \cite{Brown}. 

It was shown in \cite{BEK06} that, in $D=4$, under assumptions on $\Gamma$ stated above, $P_\Gamma$ indeed converges absolutely. For explicit computations of these periods in Euclidean $\phi^4$ theory in 4 dimensions, see for example \cite{Schnetz}. 

It is highly plausible that this result can also be generalized to other dimensions, e.g. $D=6$. For an elementary argument, first note that potential singularities of the integrand lie on $C\doteq X_\Gamma\cap \partial\Delta$,
the intersection of the hyper-surface $X_\Gamma\doteq \{\vec{\al}\in\RR^{|E(\Gamma)|}|\hat{\Psi}_\Gamma(\vec{\al})=0\}$ with the boundary $\partial\Delta$. If $C$ is just a collection of points, one can split the integration region into small neighborhoods of these points and the rest. For each such neighborhood one parametrizes the integral using spherical coordinates around the point and examines the behaviour of the integrand as the radius $r$ approaches 0. One can now observe that for each such integral, extra factors of $\al_e$ contribute $r^{(|E(\Gamma)|-1)(\frac{D}{2}-2)}$, the integration measure contributes $r^{|E(\Gamma)|-2}$, while the denominator contributes $r^{-(|V(\Gamma)|-2)\frac{D}{2}}$. The last assertion follows from the fact that $\hat{\Psi}_\Gamma$ is a degree $|V(\Gamma)|-1$ polynomial and because we are integrating over the simplex, the dominant contribution comes from degree $|V(\Gamma)|-2$ terms. Since $|V(\Gamma)|-1=\frac{D-2}{2}|E(\Gamma)|$, the integrand can be bounded by a constant, as $r\rightarrow 0$. We perform these estimates explicitly in Example~\ref{sixD}.

In proposition \ref{mainprop} we show how periods defined by \eqref{period} appear in distributional residues in Lorentzian signature. Before we do that, it is worth to recall a few facts concerning graph polynomials (see \cite{BognerThesis,BogWe} for a more comprehensive review).
\begin{df}[\cite{Sta98,Tut84}]\label{Lap}
	The generic graph Laplacian (or Kirchhoff matrix) is the $|V(\Gamma)|\times |V(\Gamma)|$ matrix defined by	
	\[
	\Lcal_{ij}(\vec{\al})=
	\begin{cases}
	\sum\limits_{e\in E(\Gamma)\atop v_i,v_j\in\partial e} -\al_e&\textrm{if}\ i\neq j,\\
	\sum\limits_{e\in E(\Gamma)\atop v_i\in\partial e}\al_e&\textrm{if}\ i=j,
	\end{cases}
	\]
	for all $v_i,v_j\in V(\Gamma)$. A sum over the empty set is set to be zero.
	\end{df}
\begin{thm}[tree-matrix theorem in \cite{Tut84}, thm. VI.29]\label{treematrix}
Let $\Gamma$ be a graph with $N$ edges, all of them labelled by the set $\{\al_1,...,\al_N\}$ and let $v_i$ be an arbitrary vertex of $\Gamma$. Let $\Lcal_\Gamma(\vec{\al})$ be the generic Laplacian and $\hat{\Psi}_\Gamma$ the dual graph
polynomial. Then we have
\[
\hat{\Psi}_\Gamma = \operatorname{Det} (\Lcal_\Gamma(\vec{\al})[v_i])\,,
\]
where the notation $\Lcal_\Gamma(\vec{\al})[v_i]$ means the $(i,i)$ minor of the matrix $\Lcal_\Gamma(\vec{\al})$.
\end{thm}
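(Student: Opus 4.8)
The plan is to establish this weighted matrix--tree identity by the classical argument that factorises the generic Laplacian through the incidence matrix and then invokes the Cauchy--Binet formula. Fix an arbitrary orientation of the edges of $\Ga$ and let $B$ be the associated $|V(\Ga)|\times|E(\Ga)|$ incidence matrix: $B_{v,e}=+1$ if $v$ is the head of $e$, $B_{v,e}=-1$ if $v$ is the tail of $e$, and $B_{v,e}=0$ otherwise (a tadpole, which is excluded in our setting anyway, would give a zero column and plays no role). Writing $W=\operatorname{diag}(\al_e)_{e\in E(\Ga)}$ for the diagonal matrix of edge weights, a direct comparison with Definition~\ref{Lap} gives $\Lcal_\Ga(\vec{\al})=B\,W\,B^T$: the $(i,j)$ entry with $i\neq j$ equals $\sum_{e}B_{v_i,e}B_{v_j,e}\al_e=-\sum_{e:\,v_i,v_j\in\partial e}\al_e$, since the two endpoints of such an edge carry opposite signs, while the $(i,i)$ entry equals $\sum_{e}B_{v_i,e}^2\al_e=\sum_{e:\,v_i\in\partial e}\al_e$. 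Deleting the row and column indexed by $v_i$ amounts to deleting the $v_i$-row of $B$; if $B_i$ denotes the resulting $(|V(\Ga)|-1)\times|E(\Ga)|$ matrix, the matrix of the minor in the statement is $\Lcal_\Ga(\vec{\al})[v_i]=B_i\,W\,B_i^T$.

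Next I would apply Cauchy--Binet to the product of $B_i$ and $W B_i^T$:
\[
\det\bigl(\Lcal_\Ga(\vec{\al})[v_i]\bigr)=\sum_{S}\det\bigl((B_i)_S\bigr)\,\det\bigl((W B_i^T)_S\bigr)\,,
\]
the sum running over all $(|V(\Ga)|-1)$-element subsets $S\subseteq E(\Ga)$, with $(B_i)_S$ the submatrix of $B_i$ on the columns in $S$ and $(W B_i^T)_S$ the submatrix of $W B_i^T$ on the rows in $S$. Because $W$ is diagonal, $(W B_i^T)_S=\operatorname{diag}(\al_e)_{e\in S}\cdot\bigl((B_i)_S\bigr)^T$, hence $\det\bigl((W B_i^T)_S\bigr)=\bigl(\prod_{e\in S}\al_e\bigr)\det\bigl((B_i)_S\bigr)$, and the formula becomes
\[
\det\bigl(\Lcal_\Ga(\vec{\al})[v_i]\bigr)=\sum_{S:\,|S|=|V(\Ga)|-1}\Bigl(\prod_{e\in S}\al_e\Bigr)\,\det\bigl((B_i)_S\bigr)^2\,.
\]
In particular the right-hand side is visibly independent of the chosen orientation, since $B$ enters only through $\det((B_i)_S)^2$.

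What remains is the combinatorial core: I claim $\det\bigl((B_i)_S\bigr)\in\{0,\pm1\}$, and it is nonzero exactly when the edge set $S$ is a spanning tree of $\Ga$. If $S$ contains a cycle, summing the corresponding columns of $B$ with signs $\pm1$ according to whether each edge is traversed along or against its orientation produces the zero vector, so these columns are linearly dependent and the determinant vanishes. If $S$ is acyclic then, having $|V(\Ga)|-1$ edges, the subgraph $(V(\Ga),S)$ is a spanning tree; it has at least one leaf $w\neq v_i$, the $w$-row of $(B_i)_S$ has a single nonzero entry $\pm1$, and expanding along this row deletes $w$ together with its pendant edge and produces the matrix of the same type for the smaller tree $(V(\Ga)\setminus\{w\},S\setminus\{\text{pendant edge}\})$; induction on $|V(\Ga)|$, with the one-vertex tree giving the empty determinant $1$, yields $|\det((B_i)_S)|=1$. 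Feeding this back, only spanning trees contribute, each with coefficient $1$, so the right-hand side is exactly $\hat{\Psi}_\Ga(\vec{\al})=\sum_{T\ \mathrm{spanning\ tree}}\prod_{e\in T}\al_e$. The same computation shows the value of the minor does not depend on which vertex $v_i$ is removed.

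I expect the only genuine obstacle to be the bookkeeping in this incidence-matrix lemma: keeping the signs consistent in the cycle-dependence step and making the leaf-induction watertight for multigraphs (parallel edges cause no trouble, as two parallel edges never lie in a common tree and the argument never uses simplicity of $\Ga$). As a cross-check I would keep in reserve the alternative deletion--contraction argument: both $\hat{\Psi}_\Ga$ and $\det(\Lcal_\Ga(\vec{\al})[v_i])$ satisfy the recursion $f(\Ga)=\al_e\,f(\Ga/e)+f(\Ga\setminus e)$ for any non-loop edge $e$, with matching base cases, the identity for the determinant following from a single cofactor expansion along the row of an endpoint of $e$.
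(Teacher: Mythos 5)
Your proof is correct, but note that the paper itself offers no proof of Theorem~\ref{treematrix}: it is stated as a quotation of thm.~VI.29 of \cite{Tut84}, so the only comparison available is with the cited literature rather than with an in-paper argument. Your route --- factorising the generic Laplacian of Definition~\ref{Lap} as $\Lcal_\Ga(\vec{\al})=B\,W\,B^T$ through an arbitrarily oriented incidence matrix, identifying the $(i,i)$ minor with $B_i W B_i^T$, expanding by Cauchy--Binet, and then showing $\det((B_i)_S)\in\{0,\pm 1\}$ with nonvanishing exactly on spanning trees --- is the standard linear-algebraic proof of the weighted matrix--tree theorem, while the deletion--contraction recursion $f(\Ga)=\al_e f(\Ga/e)+f(\Ga\setminus e)$ that you keep in reserve is closer in spirit to the inductive treatment in the cited monograph; either suffices, and having both gives a useful cross-check. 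The details you flag are handled consistently with what the paper needs: tadpoles are excluded for the graphs in $\Gcal_n$, so the zero-column caveat is moot; parallel edges genuinely occur here (fish and sunset graphs) and your cycle-dependence and leaf-induction steps never use simplicity of $\Ga$; and your closing observation that the minor is independent of the deleted vertex $v_i$ is precisely the fact invoked in the proof of Proposition~\ref{mainprop} to justify that the choice of base vertex in the Feynman rules is immaterial. Two small points you use implicitly and could state explicitly to make the induction fully watertight: an acyclic edge set of size $|V(\Ga)|-1$ is automatically spanning and connected (so calling $(V(\Ga),S)$ a spanning tree is justified), and a tree with at least two vertices has at least two leaves, which guarantees the existence of a leaf $w\neq v_i$ at every step.
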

We are now ready to prove our main result of this section.
\begin{prop}\label{mainprop}
	Let $\Gamma$ be a graph with $|E(\Gamma)|=\frac{D}{2}h_1$ and such that every proper subgraph $\gamma$ satisfies $|E(\Gamma)|>2h_1$. If $P_\Gamma$ converges absolutely, then the distributional residue $\Res u^\Gamma$ is given by
	\[
	\Res u_\Gamma=
	c_0=\frac{2i^{(2D-1)(|V|-1)}}{(4\pi)^{|E(\Ga)|}}\,P_\Gamma\,.
	\]
\end{prop}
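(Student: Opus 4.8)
\emph{Proof plan.}
The plan is to compute the number $c_0$ directly in a parametric (Schwinger/Feynman) representation. Since the residue $c_0=\int_\Sigma u^\Gamma(\hat x)\,d\sigma(\hat x)$ is independent of the chosen surface $\Sigma$, I would take $\Sigma=S^{N-1}\subset\RR^N$ with $N=(|V(\Gamma)|-1)D$ and $\rho_\Sigma$ the Euclidean norm. The hypotheses give $(D-2)|E(\Gamma)|=D(|V(\Gamma)|-1)=N$ and, since no proper subgraph is superficially divergent, make $u^\Gamma$ a distribution on $\RR^N\setminus\{0\}$, homogeneous of degree $-N$; passing to polar coordinates then gives
\[
c_0=\big\langle u^\Gamma,\ \phi(|\cdot|)\big\rangle
\]
for any radial cut-off $\phi\in\Ci_c\big((0,\infty)\big)$ normalised by $\int_0^\infty\phi(r)\,r^{-1}\,dr=1$.

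Next I would Schwinger-parametrise each Feynman propagator by $(X^2-i0)^{-\lambda}=\frac{e^{i\pi\lambda/2}}{\bGa(\lambda)}\int_0^\infty\alpha^{\lambda-1}e^{-i\alpha X^2}d\alpha$, $\lambda=\tfrac D2-1$ (read with a convergence factor $e^{-\delta\alpha}$), so that $c_0$ becomes $\frac{k_D^{|E(\Gamma)|}e^{i\pi\lambda|E(\Gamma)|/2}}{\bGa(\lambda)^{|E(\Gamma)|}}$ times the parameter integral of $\int_{\RR^N}e^{-i\sum_e\alpha_eX_e^2}\phi(|x|)\,d^Nx$, where (after a vertex $v_0$ is set to the origin, $X_e=x_{s(e)}-x_{f(e)}$) the exponent is the quadratic form
\[
\sum_e\alpha_eX_e^2=\sum_{\mu=0}^{D-1}\eta^{\mu\mu}\,(x^\mu)^{\top}Q(\vec\alpha)\,x^\mu,\qquad Q(\vec\alpha)=\Lcal_\Gamma(\vec\alpha)[v_0],
\]
which is positive definite for $\alpha_e>0$. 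I would then decompose the parameter integral as a scale times the simplex ($\alpha_e=t\beta_e$, $\sum_e\beta_e=1$); the total power of $t$ turns out to be $|E(\Gamma)|\lambda-1=\tfrac N2-1$, so after the substitution $x\mapsto x/\sqrt t$ the only remaining $t$-dependence sits in $\phi(|x|/\sqrt t)$, and the scale integral collapses to $\int_0^\infty\tfrac{dt}{t}\,\phi(|x|/\sqrt t)=2\int_0^\infty\phi(r)r^{-1}dr=2$ — which is exactly where the normalisation of $\phi$ is used. Since $\lambda-1=\tfrac D2-2$ one has $\prod_e\beta_e^{\lambda-1}\,\delta(1-\sum_e\beta_e)\prod_ed\beta_e=\mu(\vec\beta)$, so what remains is
\[
c_0=\frac{2\,k_D^{|E(\Gamma)|}\,e^{i\pi\lambda|E(\Gamma)|/2}}{\bGa(\lambda)^{|E(\Gamma)|}}\int_\Delta\mu(\vec\beta)\int_{\RR^N}e^{-i\sum_e\beta_eX_e^2}\,d^Nx .
\]

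The inner integral I would evaluate by the Fresnel formula $\int_{\RR^N}e^{-i\,y^{\top}\!Ay}\,d^Ny=\pi^{N/2}|\det A|^{-1/2}e^{-\frac{i\pi}{4}\operatorname{sgn}(A)}$, with $\operatorname{sgn}(A)$ the signature of $A$. Here $A$ represents $\sum_\mu\eta^{\mu\mu}(x^\mu)^{\top}Q(\vec\beta)x^\mu$, so by the tree-matrix theorem (Theorem~\ref{treematrix}) $|\det A|=(\det Q(\vec\beta))^D=\hat\Psi_\Gamma(\vec\beta)^D$, while $\operatorname{sgn}(A)=(|V(\Gamma)|-1)-(D-1)(|V(\Gamma)|-1)=-(D-2)(|V(\Gamma)|-1)$; hence the inner integral equals $\pi^{D(|V(\Gamma)|-1)/2}\,\hat\Psi_\Gamma(\vec\beta)^{-D/2}\,e^{\frac{i\pi}{4}(D-2)(|V(\Gamma)|-1)}$ and the surviving simplex integral $\int_\Delta\mu(\vec\beta)\,\hat\Psi_\Gamma(\vec\beta)^{-D/2}$ is precisely $P_\Gamma$. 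It then remains to assemble the prefactor: substituting $k_D=(-1)^{D/2-1}\bGa(D/2-1)/(4\pi^{D/2})$, cancelling the $\bGa(\lambda)$'s, and using $(D-2)|E(\Gamma)|=D(|V(\Gamma)|-1)$ and $|E(\Gamma)|=\tfrac D2 h_1$ (equivalently $2(|V(\Gamma)|-1)=(D-2)h_1$), one finds that the powers of $\pi$ collapse so that $4^{|E(\Gamma)|}$ and the surviving $\pi^{|E(\Gamma)|}$ combine into $(4\pi)^{|E(\Gamma)|}$, while the accumulated phase simplifies as
\[
(-1)^{\lambda|E(\Gamma)|}\,e^{i\pi\lambda|E(\Gamma)|/2}\,e^{\frac{i\pi}{4}(D-2)(|V(\Gamma)|-1)}=e^{\frac{i\pi}{2}(2D-1)(|V(\Gamma)|-1)}=i^{(2D-1)(|V(\Gamma)|-1)} ,
\]
which gives the claimed formula.

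The main obstacle will be the analytic rigour behind these oscillatory manipulations: none of the interchanges performed above — the parameter integral against $\phi(|\cdot|)$, the scale integral $\int dt/t$ against the position integral, and the Fresnel Gaussian over all of $\RR^N$ — is absolutely convergent, so each has to be carried out with a regulator held finite throughout (for instance the Feynman $\varepsilon$, which makes $u^\Gamma$ a bounded smooth function, or an auxiliary damping $e^{-\varepsilon|x|^2}$) and removed only at the very end. The one genuinely delicate estimate is a stationary-phase bound, uniform in the regulator, showing that the $t$-integrand decays rapidly as $t\to\infty$ — its support in $x$ escapes to infinity and contains no critical point of the phase — which legitimises pulling the limit through $\int dt/t$. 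Granting that, the argument is just the classical parametric derivation of the dual Symanzik polynomial carried out in Lorentzian signature, and the only genuinely new ingredient is the bookkeeping of the phases, where the Minkowski signature $-(D-2)(|V(\Gamma)|-1)$ of the quadratic form is exactly what produces the clean factor $i^{(2D-1)(|V(\Gamma)|-1)}$.
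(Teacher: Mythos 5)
Your proposal is correct, and the final bookkeeping checks out: with $n=|V(\Gamma)|-1$ the accumulated power of $i$ is $2\lambda|E(\Gamma)|+\lambda|E(\Gamma)|+\tfrac12(D-2)n=(2D-1)n$ and the powers of $\pi$ collapse to $(4\pi)^{-|E(\Gamma)|}$, exactly as in the paper's statement. The route differs from the paper's in the execution of the final step rather than in substance. The paper keeps the compact surface $\Sigma$ throughout: it combines the propagators with a single Feynman-parameter denominator, block-diagonalises the quadratic form via the reduced graph Laplacian, invokes Theorem~\ref{treematrix} to identify the Jacobian $(\operatorname{Det}N)^{-D/2}=\hat{\Psi}_\Gamma^{-D/2}$, and then quotes the precomputed residue $\Res\bigl((X^T\Xi X-i0)^{-N/2}\bigr)=i^{s}|S^{d-1}|$ from \cite{BDF}, the factor $2$ arising from $|S^{d-1}|=2\pi^{d/2}/\bGa(d/2)$. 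You instead represent $c_0$ by pairing $u^\Gamma$ against a radial cut-off (legitimate, since $u^\Gamma$ is homogeneous of degree $-N$ away from the origin), put one Schwinger parameter per propagator into an oscillatory exponent, split scale from simplex so that the factor $2$ comes from the radial normalisation $\int_0^\infty\phi(r)r^{-1}dr$, and evaluate a Fresnel Gaussian whose signature $-(D-2)n$ produces the phase; the tree-matrix theorem enters in the same place, through $\det Q=\hat{\Psi}_\Gamma$. What your version buys is self-containedness — no appeal to the tabulated residue of \cite{BDF} — and a transparent origin for the Lorentzian phase; what the paper's version buys is that all $X$-integrations stay over a compact surface, so the only non-absolutely-convergent manipulation is the interchange of the $\epsilon\to0$ limit with the simplex integral (handled there by keeping $\epsilon>0$ and the uniform bound on $\Delta$). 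The analytic caveats you flag — Fubini for the oscillatory integrals, the non-stationary-phase decay in the scale variable uniform in the regulator, and uniformity near $\partial\Delta$ where $Q(\vec\beta)$ degenerates — are genuine but are of the same nature as the steps the paper treats by holding the Feynman $\epsilon$ fixed, and they are controlled in the end by the assumed absolute convergence of $P_\Gamma$; so your argument is at the same level of rigour as the published one.
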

\begin{proof}
	First recall that the integral \eqref{c0} doesn't depend on the choice of $\Sigma$. The simplest choice is the unit Euclidean sphere in $\RR^{Dn}$, where $n=|V(\Gamma)|-1$.
	Denote
	\[
	X\equiv(x_1^0,\dots,x^0_n,\dots,x^{D-1}_1,\dots,x^{D-1}_n)
	\] 
Using the formula \eqref{uGamma} we obtain
	\[
	c_0=(-1)^{(\frac{D}{2}-1)|E(\Ga)|}\left(\frac{\bGa(\frac{D}{2}-1)}{4\pi^{\frac{D}{2}}}\right)^{|E(\Ga)|}\lim_{\epsilon\rightarrow0^+}\int_{\Sigma}\frac{d\sigma(X)}{\prod_{e\in E(\Gamma)}((x_{s(e)}-x_{f(e)})^2-i\epsilon)^{\frac{D}{2}-1}}\,,
		\]
	 Denote $b_e\equiv (x_{s(e)}-x_{f(e)})^2-i\epsilon$, $e\in E(\Gamma)$. We have $\Re(ib_e)=\epsilon>0$, so we can use the well known Schwinger trick to write
	\begin{multline*}
	\frac{1}{\prod_{e\in E(\Gamma)}b_e^{\frac{D}{2}-1}}=\frac{\bGa((\frac{D}{2}-1)|E(\Gamma)|)}{(\bGa(\frac{D}{2}-1))^{|E(\Gamma)|}}\int_{0}^{1}\dots\int_{0}^{1} \frac{\delta(1-\sum_{e\in E(\Gamma)}\alpha_e)}{(\sum_{e\in E(\Gamma)} \alpha_eb_e)^{(\frac{D}{2}-1)|E(\Gamma)|}}\,\prod_{e\in E(\Gamma)}\alpha_e^{\frac{D}{2}-1}d\alpha_e\\\equiv \frac{\bGa((\frac{D}{2}-1)|E(\Gamma)|)}{(\bGa(\frac{D}{2}-1))^{|E(\Gamma)|}}\int\limits_\Delta \frac{\mu(\vec{\al})}{(\sum_{e\in E(\Gamma)} \alpha_eb_e)^{(\frac{D}{2}-1)|E(\Gamma)|}} \,,
	\end{multline*}
	where $k=|E(\Gamma)|$. Now we want to perform a change of variables to put the quadratic form $B\equiv \sum_{e\in E(\Gamma)} \alpha_eb_e$ into its normal form. We write $B=X^T M X$, where $M$ is a block diagonal matrix of the form
	\[
	M=\left(\begin{matrix}
	N&0&0&\dots&0\\
	0&-N&0&\dots&0\\
	0&0&-N&\dots&0\\
	\vdots&&&&	\vdots\\
	0&0&0&\dots&-N\\

	\end{matrix}\right)\,.
	\]
	Each block is a $(|V(\Gamma)|-1)$-dimensional symmetric positive semidefinite matrix (as $\alpha_e\geq0$, $\forall e\in E(\Gamma)$ ), which is in fact the $(0,0)$ minor of the generic graph Laplacian $\Lcal_\Gamma(\vec{\al})$ introduced in Definition \ref{Lap}. We can find a non-singular matrix $\Lambda$ such that
	\[
	\Lambda^T N\Lambda=\1\,.
	\]
	The argument proceeds now exactly the same as in \cite{BEK06,Bloch}. Defining
	\[
	S\doteq \left(\begin{matrix}
	\Lambda&0&0&\dots&0\\
	0&\Lambda&0&\dots&0\\
	0&0&\Lambda&\dots&0\\
	\vdots&&&&	\vdots\\
	0&0&0&\dots&\Lambda\\
	\end{matrix}\right)\,,
	\]
	we obtain
		\be\label{canon:form}
	S^TMS\doteq \left(\begin{matrix}
	\1&0&0&\dots&0\\
	0&-\1&0&\dots&0\\
	0&0&-\1&\dots&0\\
	\vdots&&&&	\vdots\\
	0&0&0&\dots&-\1\\
	\end{matrix}\right)\equiv \Xi\,,
	\ee
	This suggests a change of variables $X\mapsto S^{-1} X$ that puts the quadratic form $B$ into the normal form. In order to perform this change of variables we only need to ensure that in the following formula the order of integration can be interchanged: 
	\be\label{int:fub}
	\int\limits_{\Sigma}\int\limits_\Delta \frac{1}{(\sum_{e\in E(\Gamma)} \alpha_eb_e)^{(\frac{D}{2}-1)|E(\Gamma)|}}\,\mu(\vec{\al})d\sigma
	\ee
	For this, note that 
	\[
	|\sum_{e\in E(\Gamma)} \alpha_eb_e|^2\geq |\sum_{e\in E(\Gamma)} \al_e|^2\epsilon^2\,.
	\]
	Since on the simplex $\Delta$ we have $\sum_{e\in E(\Gamma)} \al_e=1$, we can conclude that
	the integrand in \eqref{int:fub} is uniformly bounded by
	 $ \frac{1}{\epsilon^2}$ and as long as $\epsilon>0$, we can interchange the order of integration and perform the desired change of variables $X\mapsto S^{-1} X$. The Jacobian for this change of variables is
	 \[
\operatorname{Det} S=(\operatorname{Det} \Lambda)^D=(\operatorname{Det} N)^{-D/2}\,,
	 \]
	 since $(\operatorname{Det} \Lambda)^2\operatorname{Det}N=1$.
	 It follows now from the tree-matrix theorem \ref{treematrix} that
	 \[
	 \operatorname{Det} N=\hat{\Psi}_{\Gamma}(\vec{\al})\,.
	 \]
	 It is now also explicitly seen that the result doesn't depend on the choice of the vertex to which we assigned 0 in our Feynman rules, as the tree-matrix theorem gives the same result for any choice of the minor $\Lcal_{\Gamma}[v_i]$, $v_i\in V(\Gamma)$. 
	 
	 We can now rewrite $c_0$ as
	\begin{multline}\label{coPsi}
	c_0=\bGa\left(|E(\Ga)|(\tfrac{D}{2}-1)\right)\left(\frac{(-1)^{(\frac{D}{2}-1)}}{4\pi^{\frac{D}{2}}}\right)^{|E(\Ga)|}\lim_{\epsilon\rightarrow 0^+}\int_{\Sigma}\frac{d\sigma}{(X^T\Xi X-i\epsilon)^{|E(\Ga)|(\frac{D}{2}-1)}}\int_{\Delta}\hat{\Psi}_\Gamma^{-D/2}(\vec{\al})\mu(\vec{\al})=\\
	\bGa\left(|E(\Ga)|(\tfrac{D}{2}-1)\right)\left(\frac{(-1)^{(\frac{D}{2}-1)}}{4\pi^{\frac{D}{2}}}\right)^{|E(\Ga)|}P_\Gamma\,\lim_{\epsilon\rightarrow 0^+}\int_{\Sigma}\frac{d\sigma}{(X^T\Xi X-i\epsilon)^{|E(\Ga)|(\frac{D}{2}-1)}}
	\,,
	\end{multline}
	where $\Xi$ is a diagonal metric given in \eqref{canon:form}.
	
	The remaining integral in \eqref{coPsi} is easy to evaluate. It is the residue of the distribution \[t(X)=\frac{1}{(X^T\Xi X-i0)^{(\frac{D}{2}-1)|E(\Ga)|}}\]
	on the indefinite product space $(\RR^{(D-2)|E(\Ga)|},\Xi)$, with divergence degree $0$ and scaling order 0. Now we use formula \cite[Appendix C, formula after eq.~(102)]{BDF}:
	\[
	\Res t=i^s|S^{d-1}|\,,
	\]
where $d$ is the total dimension of the indefinite product space (in our case $d=(|V(\Gamma)|-1)D=|E(\Gamma)|(D-2)$) and $s$ is the number of minus signs in the signature of $\Xi$ (in our case $s=(|V(\Gamma)|-1)(D-1)$) and $|S^{d-1}|$ is the volume of the unit sphere in $d$ dimensions. We obtain 
	\[
	\Res t=i^{(D-1)(|V(\Ga)|-1)}\left|S^{|E(\Ga)|(D-2)-1}\right|\,.
	\]
	With this result and the formula for the volume of the unit sphere in $d$ dimensions
	\[
	|S^{d-1}|=\frac{2\pi^{n/2}}{\bGa(\tfrac{n}{2})}\,,
	\]
	we arrive at
	\[
	c_0=i^{(2D-1)(|V(\Ga)|-1)}\frac{2}{(4\pi)^{|E(\Ga)|}}\,P_\Gamma\,.
	\]
\end{proof}
In particular, for $D=4$ we have
\[
c_0=(-i)^{(|V(\Ga)|-1)}\frac{2}{(4\pi)^{|E(\Ga)|}}\int_{\Delta}|\hat{\Psi}_\Gamma|^{-2}\Omega(\al) \,,
\]
where $\Omega(\al)$ is the standard measure on the simplex.
\begin{example}
The simplest example is the fish graph in 4 dimensions:
\begin{center}
	\begin{tikzpicture}[x=1.00mm, y=1.00mm, inner xsep=0pt, inner ysep=0pt, outer xsep=0pt, outer ysep=0pt]
	\path[line width=0mm] (82.33,85.75) rectangle +(20.09,8.50);
	\definecolor{L}{rgb}{0,0,0}
	\definecolor{F}{rgb}{0,0,0}
	\path[line width=0.30mm, draw=L, fill=F] (84.87,90.05) circle (0.54mm);
	\path[line width=0.30mm, draw=L, fill=F] (99.87,90.05) circle (0.54mm);
	\path[line width=0.30mm, draw=L] (85.00,90.00) .. controls (89.50,87.00) and (95.50,87.00) .. (100.00,90.00);
	\path[line width=0.30mm, draw=L] (85.00,90.00) .. controls (89.50,93.00) and (95.50,93.00) .. (100.00,90.00);
	\end{tikzpicture}%
\end{center}
 The scaling degree and the scaling order vanish, so from proposition \ref{mainprop} we obtain
\[
c_0=-i\frac{2}{(4\pi)^{2}}P_{\Gamma}\,.
\]
Here $\hat{\Psi}_\Gamma=\al_1+\al_2$, so $P_\Gamma=1$ and hence $c_0=\frac{-i}{8\pi^2}$.
\end{example}
\begin{example}\label{sixD}
Following \cite{BDF}, consider the triangle in 6 dimensions:
\begin{center}
	\begin{tikzpicture}[scale=0.5,x=1.00mm, y=1.00mm, inner xsep=0pt, inner ysep=0pt, outer xsep=0pt, outer ysep=0pt]
	\path[line width=0mm] (67.20,82.35) rectangle +(25.35,24.52);
	\definecolor{L}{rgb}{0,0,0}
	\definecolor{F}{rgb}{0,0,0}
	\path[line width=0.30mm, draw=L, fill=F] (79.87,104.05) circle (0.82mm);
	\path[line width=0.30mm, draw=L] (80.00,104.00);
	\path[line width=0.30mm, draw=L] (80.00,104.00) -- (70.00,85.00);
	\path[line width=0.30mm, draw=L] (80.02,103.68) -- (89.87,84.95);
	\path[line width=0.30mm, draw=L] (70.00,85.00) -- (89.50,85.00);
	\path[line width=0.30mm, draw=L, fill=F] (89.73,85.21) circle (0.82mm);
	\path[line width=0.30mm, draw=L, fill=F] (70.02,85.17) circle (0.82mm);
	\end{tikzpicture}
\end{center}
 Proposition \ref{mainprop} implies that
\[
c_0=-\frac{2}{(4\pi)^{3}} P_\Gamma\,,
\]
if $P_\Gamma$ converges. Since $\hat{\Psi}(\vec{\al})=\al_1\al_2+\al_1\al_3+\al_2\al_3$, we have
\[
P_\Gamma=
\int_{\Delta}\frac{\al_1\al_2\al_3\delta(1-\al_1-\al_2-\al_3)d\al_1d\al_2d\al_3}{(\al_1\al_2+\al_2\al_3+\al_1\al_3)^3}\,.
\]
To see that this integral is absolutely convergent, note that singularities of the integrand appear only in the ``corners'' of the simplex. Using the symmetry of the problem, we pick the $\alpha_3=1$ and consider the integral $I_\epsilon$ of the same integrand as above, but over a small neighborhood of the point $(\alpha_1,\alpha_2,1)$ on the simplex $\Delta$. Using polar coordinates $\alpha_1=r\cos\theta$, $\alpha_2=r\sin\theta$, this integral takes the form
\[
I_\epsilon=\int_0^\epsilon\int_0^{\pi/2}\frac{r^3\sin^2\theta\cos\theta(1-r\sqrt{2}\sin(\theta+\frac{\pi}{4}))}{r^3(\frac{1}{2}r\sin2\theta+\sqrt{2}\sin(\theta+\frac{\pi}{4})-2r\sin^2(\theta+\frac{\pi}{4}))^3}d\theta dr
\]
Since $\sin(\theta+\frac{\pi}{4})$ does not vanish in the interval $[0,\frac{\pi}{2}]$, the integrand can be bounded by a constant when $r\rightarrow 0$, so $I_\epsilon$ is absolutely convergent and so is $P_\Gamma$.

Following \cite[example on p.~39]{BDF} we evaluate this integral by integrating out $\al_3$ and then changing the variables to $\lambda,\kappa$, so that $\al_1=\lambda\kappa$ and $\al_2=(1-\lambda)\kappa$. We obtain
\[
P_\Gamma=\int_0^1\int_0^1 \frac{\lambda(1-\lambda)\kappa^2(1-\kappa)}{(\lambda(1-\lambda)\kappa^2+\kappa(1-\kappa))^3}d\kappa d\lambda=\frac{1}{2}\,,
\]
so
\[
c_0=-\frac{1}{2^6\pi^3}\,.\]
\end{example}
\begin{example}
The final example is the well known ``wheel with three spokes'' graph in 4 dimensions:
\begin{center}
	\begin{tikzpicture}[scale=0.7,x=1.00mm, y=1.00mm, inner xsep=0pt, inner ysep=0pt, outer xsep=0pt, outer ysep=0pt]
	\path[line width=0mm] (65.92,77.92) rectangle +(28.17,28.68);
	\definecolor{L}{rgb}{0,0,0}
	\definecolor{F}{rgb}{0,0,0}
	\path[line width=0.30mm, draw=L, fill=F] (79.87,104.05) circle (0.54mm);
	\path[line width=0.30mm, draw=L] (80.00,92.00) circle (12.08mm);
	\path[line width=0.30mm, draw=L] (80.00,92.00) circle (12.08mm);
	\path[line width=0.30mm, draw=L, fill=F] (89.87,85.05) circle (0.54mm);
	\path[line width=0.30mm, draw=L, fill=F] (69.87,85.05) circle (0.54mm);
	\path[line width=0.30mm, draw=L, fill=F] (79.87,92.05) circle (0.54mm);
	\path[line width=0.30mm, draw=L] (80.00,104.00);
	\path[line width=0.30mm, draw=L] (80.00,104.00) -- (80.00,92.00);
	\path[line width=0.30mm, draw=L] (70.00,85.00) -- (80.00,92.00);
	\path[line width=0.30mm, draw=L] (80.00,92.00) -- (90.00,85.00);
	\end{tikzpicture}%
\end{center}
 This one also satisfies the assumptions of proposition \ref{mainprop}, so using the general formula we obtain
\[
c_0=\frac{i}{2^{11}\pi^6} P_\Gamma=\frac{3i}{2^{10}\pi^6}\,\zeta(3)\,,
\]
where we used the well-known value $P_\Gamma=6\zeta(3)$ (see e.g. \cite{Brown09}).
\end{example}
Proposition \ref{mainprop} allows to reduce the problem of computing a large class of distributional residues to the problem of evaluating periods arising from graph polynomials, of the form discussed in \cite{Schnetz,BEK06,Brown,MaAl}, so can be used to easily translate the existing results and apply them to theories in Lorentzian signature.

Let us come back to the general case. Let $\Gamma$ be a graph with $\omega_\Gamma\geq 0$. If it contains proper subgraphs with  $\omega_\gamma\geq 0$, then one has to renormalize these first and substitute the result to the expression for $t^\Gamma$. If overlapping divergences are present, a partition of unity might be required. However, there are convincing arguments that this step can be avoided; compare the example 4.16 in \cite{DFKR} (using the partition of unity) with example 5.3 of \cite{GGV14} (without the partition of unity).
A distribution constructed this way is denoted by $\tilde{u}^{\Gamma}$ and it was shown in \cite{HW} that the property of almost homogeneous scaling is preserved in the recursive procedure of renormalization of proper subgraphs. Hence $\tilde{u}^{\Gamma}$ is an almost homogeneously scaling distribution and the general formula for its residue is
\[
\operatorname{Res}(\tilde{u}^\Gamma)=\sum_{|\beta|=\al-N}c_\beta\partial^\beta\,,
\]
where
\begin{equation}\label{gener}
c_\beta\doteq (-1)^{\al-N}\frac{1}{\beta!}\int_{\Sigma}\hat{x}^{\beta}{(\mathscr{E}+\al)^k\tilde{u}^\Gamma}(\hat{x})d\sigma(\hat{x})\,,
\end{equation}
If a graph is EG primitive, then $k=0$, $\tilde{u}^\Gamma=u^\Gamma$ and the residue is uniquely determined by the graph. Residues for EG primitive graphs which are not CK primitive can be obtained by using the fact that coefficients $c_\beta$ are Lorentz invariant. This implies that integrals \eqref{gener} can be reduced to scalar integrals multiplying appropriate powers of $\eta_{\mu\nu}$.

We believe that a result generalizing Proposition \ref{mainprop} can be established also in this case and we will address it in future work.
\begin{example}\label{sunset}
Consider the sunset diagram in 4 dimensions:
\begin{center}
	\begin{tikzpicture}[x=1.00mm, y=1.00mm, inner xsep=0pt, inner ysep=0pt, outer xsep=0pt, outer ysep=0pt]
	\path[line width=0mm] (82.33,84.40) rectangle +(20.09,11.07);
	\definecolor{L}{rgb}{0,0,0}
	\definecolor{F}{rgb}{0,0,0}
	\path[line width=0.30mm, draw=L, fill=F] (84.87,90.05) circle (0.54mm);
	\path[line width=0.30mm, draw=L, fill=F] (99.87,90.05) circle (0.54mm);
	\path[line width=0.30mm, draw=L] (85.00,90.00) .. controls (89.05,87.96) and (96.06,87.96) .. (100.00,90.00);
	\path[line width=0.30mm, draw=L] (85.00,90.00) .. controls (88.98,91.94) and (96.06,91.96) .. (100.00,90.00);
	\path[line width=0.30mm, draw=L] (92.56,89.94) ellipse (7.52mm and 3.54mm);
	\end{tikzpicture}
\end{center}
 We have $m=0$ and $\al=8$. This implies that $|\beta|=4$ so we need to compute
\[
c_{\mu\nu\al\beta}=\frac{1}{(2\pi)^84!}\int_{\Sigma}\frac{x^\mu x^\nu x^\al x^\beta}{(x^2-i0)^4}d\sigma(x)\,.
\]
The Lorenz invariance and the symmetry of the problem imply that
\begin{multline}
(2\pi)^8c_{\mu\nu\al\beta}=\frac{1}{4!24}(\eta_{\al\beta}\eta_{\mu\nu}+\eta_{\mu\beta}\eta_{\nu\alpha}+\eta_{\mu\al}\eta_{\nu\beta})\int_{\sigma}\frac{(x^2)^2}{(x^2-i0)^4}d\sigma(x)\\
=\frac{1}{2^63^2}(\eta_{\al\beta}\eta_{\mu\nu}+\eta_{\mu\beta}\eta_{\nu\alpha}+\eta_{\mu\al}\eta_{\nu\beta})\int_{\sigma}\frac{d\sigma(x)}{(x^2-i0)^2}\\=-\frac{i\pi^2}{2^53^2}(\eta_{\al\beta}\eta_{\mu\nu}+\eta_{\mu\beta}\eta_{\nu\alpha}+\eta_{\mu\al}\eta_{\nu\beta})
\end{multline}
Hence 
\[
\operatorname{Res}(u_\Gamma)=-\frac{i}{2^{13}3\pi^6}\Box^2\,.
\]
\end{example}
In fact there is a different, more direct, way to obtain residues for all the ``sunset'' type diagrams with arbitrary number of lines. For details see \cite[section 5.2]{NST} or \cite[Appendix C]{BDF}. The general formula is
\[
\Res\left(\frac{1}{(X^2-i0)^{\frac{d}{2}+l}}\right)=c_l\Box^l\,,
\]
where 
\[
c_l=i^s|S^{d-1}|\frac{\bGa(\frac{d}{2})}{2^{2l}l!\bGa(\frac{d}{2}+l)}
\]
and $X\in\RR^d$ with the diagonal metric of the form $\operatorname{diag}(1,\dots,1\underbrace{-\1,\dots,-\1}_{s})$. The example \ref{sunset} is then the special case of this formula with $d=4$, $s=3$ and $l=2$.
\section{Renormalization group flow}
 In \cite{BDF} the breaking of the homogeneous scaling is shown to relate to the definition of the $\beta$-function. In this section we review the main ideas of that argument.
 
 In the first step we generalize the discussion from the previous sections from the massless to the massive scalar field. For studying the scaling properties, it is crucial to work with time-ordered products that are smooth in mass\footnote{The usual physical argument for the 2-point functions not being smooth at $m^2=0$ is that it should not be possible to go smoothly to models with imaginary mass. However, the smoothness in mass is crucial for renormalization on curved spacetimes, as argued in \cite{HW01,HW02,BDF,HolWal08}. Another approach was proposed in \cite{Due15}, where the ``usual'' 2-point function can be used and the smoothness in mass is replaced by the smoothen of appropriately rescaled time-ordered products.}. This is, unfortunately, not the case if we use the standard Feynman propagator $\Delta^{\mathrm F}$. To rectify this, we replace in our framework the 2-point function $\Delta^+$ with a Hadamard 2-point function $H$ and the Feynman propagator $\Delta^{\mathrm F}$ with a corresponding modified Feynman propagator $H^{\mathrm{F}}$. Crucially, $H$ and $H^{\mathrm{F}}$ are smooth in mass. The choice of these objects is unique up to a parameter $M>0$ with the dimension of mass. Explicit formula for $H^{\mathrm{F}}_{\sst M}$ was derived in \cite{BDF} and it reads:
 \be\label{Feynmanprop}
\HFmu(x)=\frac{m^{D-2}}{(2\pi)^{\frac{D}{2}}\, y^{D-2}}\,\left(K_{\frac{D}{2}-1}(y)+(-1)^{\frac{D}{2}}\,
 \log{\frac{M}{m}}\ I_{\frac{D}{2}-1}(y)\right)\ ,
 \ee
 where $y\doteq\sqrt{-m^2(x^2-i0)}$ and $K$, $I$ are modified Bessel's functions. In 4 dimensions this amounts to
 \begin{eqnarray}
 \HFmu(x)&=&\frac{-1}{4\pi^2(x^2-i0)}\notag\\
 &+&{\rm log}(-M^2(x^2-i0))\, m^2 f(m^2x^2)+m^2 F(m^2x^2)\ ,\notag
  \end{eqnarray}
  while in 6 dimensions
  \begin{eqnarray}
 \HFmu(x)&=&\frac{1}{4\pi^3(x^2-i0)^2}+\frac{m^2 f(m^2x^2)}{\pi\,(x^2-i0)}\notag\\
 &+&\frac{1}{\pi}\left({\rm log}(-M^2(x^2-i0))\, m^4 f'(m^2x^2)+m^4 F'(m^2x^2)\right)\ ,\notag
 \end{eqnarray}
 where $f$ and $F$ are real-valued analytic functions.
 $f$ and $f'$ can be expressed in terms of the Bessel functions $J_1$ and $J_2$, respectively, namely
 \[
 f(z)\doteq \frac{1}{8\pi^2\sqrt{z}}\>J_1(\sqrt{z})\ ,\quad f(0)=\frac{1}{2^4\,\pi^2}\ ,
 \quad f'(z)=\frac{-1}{16\,\pi^2\,z}\>J_2(\sqrt{z})\ ;\label{f}
 \]
 and $F$ is given by a power series
 \[
 F(z)\doteq -\frac{1}{4\pi}\sum_{k=0}^\infty\{\psi(k+1)+\psi(k+2)\}
 \frac{(-z/4)^k}{k!(k+1)!}\ , \quad F(0)=\frac{2\,C-1}{4\pi}\ ,
 \]
 where $C$ is Euler's constant and the Psi-function is related to the Gamma-function by
 $\psi(x)\doteq\bGa^\prime (x)\, /\, \bGa (x)$.
  
 The non-uniqueness of $H$ and $H^{\mathrm{F}}$ forces one to use a bit more abstract construction to define the observables and time-ordered product. 
 \begin{df}
 	For a mass $m$ we define a family of algebras $\fA(m)_{\sst M}\doteq (\Fcal_{\mc}[[\hbar]],\star_{\sst H})$, labeled by $M>0$, where $H\equiv H_{\sst M}^m$ and $\star_{\sst H}$ is defined by
 	\[
 	(F\star_{\sst H} G)(\ph)\doteq e^{\hbar\left\langle H,\frac{\delta^2}{\delta\ph\delta\ph'}\right\rangle}F(\ph)G(\ph')|_{\ph'=\ph}\,
 	\]
 \end{df}
 Different choices of the Hadamard 2-point function for a given mass $m$ differ by a smooth function, i.e.
$ H^m_{\sst M_1}-H^m_{\sst M_2}$ is smooth. This allows to define a homomorphism
\[
\al^m_{\sst M_1M_2}\doteq e^{\hbar\,\left\langle H^m_{\sst M_1}-H^m_{\sst M_2},\frac{\delta^2}{\delta\ph^2}\right\rangle}\,,
\]
between the algebras $\fA(m)_{\sst M_1}$ and $\fA(m)_{\sst M_2}$. We are now ready to define the algebra of observables for a fixed mass.
\begin{df}
	$\fA(m)$, the algebra of observables for mass $m$ consists of families $A=(A_{\sst M})_{M>0}$, where $A_{\sst M}\in\fA(m)_{\sst M}$ and we have $A_{\sst M_1}=\al^m_{\sst M_1M_2}(A_{\sst M_2})$.
\end{df}
We can identify abstract elements of the algebra $\fA(m)$ with concrete functionals in $\Fcal_{\mc}[[\hbar]]$. For $A\in\fA(m)$ denote
\[
A_{\sst M}\doteq \al_{\sst H}(A)\,, 
\]
where $\al_{\sst H}\equiv e^{ \left<\hbar H,\frac{\delta^2}{\delta\ph^2}\right>}$ and $H\equiv H_{\sst M}^m$ is the appropriate Hadamard 2-point function. $A_{\sst M}$ defined this way is now a functional in $\Fcal_{\mc}[[\hbar]]$. Conversely, let $F\in\Fcal_{\mc}$. We denote by
$\al_{\sst H}^{-1}F$ 
the element of $\fA(m)$ such that $(\al_{\sst H}^{-1}F)_{\sst M}=F$, where  $H\equiv H_{\sst M}^m$, as above. The rationale behind this notation is explained in \cite{BDF} and further clarified in \cite{Book}. Let $\fA_\loc(m)$ denote the subspace of $\fA(m)$ arising from local functionals.

Now, following \cite{BDF}, we want to combine algebras corresponding to different masses in a common algebraic structure.
\begin{df}
We define the following bundle of algebras
 \[
 \Bcal = \bigsqcup_{m^2\in\RR}\fA(m)\ .
 \]
\end{df} 
Let $A=(A^m)_{m^2\in\RR}$ be a section of $\Bcal$. We fix $M>0$ and define a function from $\RR_+$ to $\Fcal_{\mc}[[\hbar]]$ by
\[
m^2\mapsto \al_{\sst M}(A)(m)\doteq\al_{\sst H}(A^m)\,,\qquad \mathrm{where}\  H\equiv H_{\sst M}^m\,.
\]
\begin{df}
	A section $A$ of $\Bcal$ is called smooth if $\al_{\sst M}(A)$ is smooth for some (and hence all) $M>0$. The space of smooth sections of $\Bcal$ is denoted by $\fA$. Similarly, $\fA_\loc$ denotes the space of smooth sections of $\Bcal$ such that $A(m)\in\fA_\loc(m)$ for all $m$.
\end{df}
$\fA$ is equipped with a non-commutative product defined as follows:
\[
(A\star B)^m_{\sst M}\doteq A^m_{\sst M}\,\star_{\sst H}B^m_{\sst M}\,,
\]
where $H\equiv H_{\sst M}^m$. The $n$-fold time-ordered product $\Tcal^n$ is a map from $\fA_\loc$ to $\fA$ defined by
\[
\Tcal^n(A_1,\dots,A_n)(m)\doteq \al_{\sst H}^{-1}\circ \Tcal^n_{\sst H}(\al_{\sst H}A_1\dots,\al_{\sst H}A_n)\,,
\]
where $H\equiv H_{\sst M}^m$ is a Hadamard 2-point function for mass $m$ and maps $\Tcal^n_{\sst H}:\Fcal_{\loc}[[\hbar]]\rightarrow\Fcal_{\mc}[[\hbar]]$ satisfy axioms from Definition \ref{Tns1} with $\Delta_+$ replaced by $H$.

The $S$-matrix is now a map from $\fA_\loc$ to $\fA$ defined by
\[
\Scal(A)\doteq \sum_{n=0}^{\infty}\frac{1}{n!}\Tcal^n(A^{\otimes n})\,.
\]
Axioms for time-ordered products can be conveniently formulated on the level of $S$-matrices.
\begin{enumerate}[{\bf S 1.}]
	\item {\bf Causality} $\Scal(A+B)=\Scal(A)\star \Scal(B)\ $ if $\supp(A^m)$ is later than $\supp(B^m)$ for all $m^2\in\RR_+$.\footnote{We define $\supp A^m\doteq \supp (\alpha_{\sst H}(A))$, where $H\equiv H_{\sst M}^m$ and this definition is independent of the choice of $M$.}
	\item $\Scal(0)=1$, $\Scal^{(1)}(0)=\id$,
	\item {\bf $\ph$-Locality}: $\al_{\sst M}\circ \Scal(A)(\ph_0)=\al_{\sst M}\circ \Scal\circ \al_{\sst M}^{-1}\left(\al_{\sst M}(A)_{\ph_0}^{[N]}\right)(\ph_0)+O(\hbar^{N+1})$, where
	\[
	\al_{\sst M}(A)_{\ph_0}^{[N]}(\ph)=\sum_{n=0}^{N}\frac{1}{n!}\left\langle\frac{\delta^n\al_{\sst M}(A)}{\delta\ph^n}(\ph_0),(\ph-\ph_0)^{\otimes n}\right\rangle
	\]
	is the Taylor expansion up to order $N$. The dependence on mass $m$ is kept implicit in all these formulas.
	\item {\bf Field independence}: $\Scal$ doesn't explicitly depends on field configurations.
\end{enumerate}
 In Epstein-Glaser renormalization the freedom in defining the renormalized S-matrix is controlled by the  St{\"u}ckelberg-Petermann renormalization group. 
 \begin{df}
 	The St{\"u}ckelberg-Petermann renormalization group $\Rcal$ is defined as the group of maps $Z:\fA_\loc\rightarrow\fA_\loc$ with the following properties:
 	\begin{enumerate}[{\bf Z 1.}]
 		\item $Z(0)=0$,\label{Zs}
 		\item $Z^{(1)}(0)=\id$,
 		\item $Z=\id+\Ocal(\hbar)$,
 		\item $Z(F+G+H)=Z(F+G)+Z(G+H)-Z(G)$, if $\supp\,F\cap\supp\,G=\emptyset$,
 		\item $\frac{\delta Z}{\delta\ph}=0$.\label{Zf}
 	\end{enumerate}
 \end{df}
Note that constructing $Z$'s can be reduced to constructing maps $Z_{\sst H}:\Fcal_{\loc}[[\hbar]]\rightarrow\Fcal_{\loc}[[\hbar]]$ which control the freedom in constructing $\Tcal^n_{\sst H}$, so the abstract formalism reviewed in the present section can be related to the more concrete description presented in sections 1-3. We have
\[
Z=\al_{\sst H}^{-1}\circ Z_{\sst H}\circ \al_{\sst H}\,.
\]
The fundamental result in the Epstein-Glaser approach to renormalization is \textit{the Main Theorem of Renormalization} (\cite{PS82,Stora02,DF04,BDF}.
\begin{thm}
	Given two $S$-matrices $\Scal$ and $\widehat{\Scal}$ 
	satisfying conditions {\bf S 1}-- {\bf S 5}, there exists a unique $Z\in\mathcal{R}$ such that 
	\begin{equation}
	\widehat{\Scal}=\Scal\circ Z \ .\label{mainthm}
	\end{equation}
	Conversely, given an $S$-matrix $\Scal$ satisfying the 
	mentioned conditions and a $Z\in\mathcal{R}$, equation (\ref{mainthm})
	defines a new $S$-matrix $\widehat{\Scal}$ satisfying {\bf S 1}-- {\bf S 5}.
\end{thm}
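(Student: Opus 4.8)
The plan is to \emph{construct} $Z$ explicitly as a formal composition inverse and then check that it lands in $\Rcal$. Since $\Tcal^0=1$ and $\Tcal^1=\id$ (axiom {\bf S 2}), the map $\Scal$ is, viewed as a formal power series in its argument (e.g.\ after rescaling $A\rightsquigarrow\la A$ and expanding in $\la$), a formal diffeomorphism with unit linear part, hence invertible under composition. I would then simply \emph{define} $Z\doteq\Scal^{-1}\circ\widehat\Scal$, which gives $\widehat\Scal=\Scal\circ Z$ at once; and since $\Scal$ is injective on such formal power series, any $Z\in\Rcal$ with $\widehat\Scal=\Scal\circ Z$ must coincide with this one, which is the uniqueness claim. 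So the entire forward direction reduces to verifying {\bf Z 1}--{\bf Z 5} for $Z=\Scal^{-1}\circ\widehat\Scal$.

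Most of these are cheap. From $\widehat\Scal(0)=\Scal(0)=1$ and {\bf S 2} one reads off $Z(0)=0$ and $Z^{(1)}(0)=\id$, i.e.\ {\bf Z 1}, {\bf Z 2}; comparing the lowest orders of the $\hbar$-expansions of $\Scal$ and $\widehat\Scal$ (both agree with the classical, $\hbar$-independent term) forces $Z=\id+\Ocal(\hbar)$, i.e.\ {\bf Z 3}; and field independence {\bf Z 5} follows by induction on the order in $\la$ from {\bf S 4} for both $\Scal$ and $\widehat\Scal$, since a composition of field-independent formal maps has a field-independent inverse. One also needs that $Z$ preserves $\fA_\loc$ and does not enlarge spacetime support; I would record $\supp Z(F)\subseteq\supp F$ as a lemma, deduced from the causal factorization {\bf S 1} of $\Scal$ and $\widehat\Scal$ by localizing $F$ away from any $x\notin\supp F$ and using that time-ordered products factorize across causally separated pieces.

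The substantive step is {\bf Z 4}, the causal additivity $Z(F+G+H)=Z(F+G)+Z(G+H)-Z(G)$ for $\supp F\cap\supp H=\emptyset$, which I would prove by induction on the order $n$ in $\la$ (equivalently, on the number of local arguments). Assuming the cocycle relation below order $n$, the causal factorization property {\bf S 1} of both $\Scal$ and $\widehat\Scal$ determines the order-$n$ components — and hence $Z_n$ — everywhere off the total diagonal in terms of strictly lower orders, where the inductive hypothesis applies; a direct computation then verifies the relation off the diagonal. Consequently the difference of the two sides of {\bf Z 4} at order $n$ is supported on the thin diagonal, so it is local, and one argues — using a partition of unity subordinate to the open cover by the regions ``$x\notin\supp F$'' and ``$x\notin\supp H$'', exactly as in the Epstein--Glaser inductive step — that this local remainder must vanish. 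This is the step I expect to be the main obstacle: it is where the causal geometry is genuinely used and where the overlapping-support / partition-of-unity bookkeeping is delicate.

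For the converse, given $Z\in\Rcal$ one puts $\widehat\Scal\doteq\Scal\circ Z$ and checks {\bf S 1}--{\bf S 5}. Axioms {\bf S 2} and {\bf S 4} are immediate from {\bf Z 1}--{\bf Z 3} and {\bf Z 5}; {\bf S 3} ($\ph$-locality) is inherited from that of $\Scal$ together with {\bf Z 5} and the locality of $Z$. The only real point is causality {\bf S 1}: if $\supp A^m$ is later than $\supp B^m$ then the supports are disjoint, so the special case $G=0$ of {\bf Z 4} (with {\bf Z 1}) gives $Z(A+B)=Z(A)+Z(B)$; the support lemma keeps $\supp Z(A^m)$ later than $\supp Z(B^m)$, and {\bf S 1} for $\Scal$ then yields $\widehat\Scal(A+B)=\Scal(Z(A)+Z(B))=\Scal(Z(A))\star\Scal(Z(B))=\widehat\Scal(A)\star\widehat\Scal(B)$, as required.
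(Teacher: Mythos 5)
You should first note that the paper does not actually prove this theorem: it is quoted from the literature (\cite{PS82,Stora02,DF04,BDF}), so the only meaningful comparison is with the standard proof, e.g.\ Theorem 4.1 of \cite{BDF}. Your plan is essentially that proof: invert $\Scal$ order by order (your $Z\doteq\Scal^{-1}\circ\widehat{\Scal}$ is exactly what the recursion \eqref{diffZS} encodes), dispose of \textbf{Z 1}--\textbf{Z 3}, \textbf{Z 5} by inspection of lowest orders, and concentrate the work in showing, by Epstein--Glaser induction off the total diagonal using causal factorization of both $S$-matrices, that the derivatives $Z^{(n)}(0)$ are supported on the thin diagonal. (You also silently used the correct hypothesis $\supp F\cap\supp H=\emptyset$ in \textbf{Z 4}, where the paper's statement has a misprint.) So the route is the right one; the issues are in how you close two of the steps.

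First, your conclusion of \textbf{Z 4} --- ``the remainder is supported on the thin diagonal, so it is local, and a partition of unity shows it must vanish'' --- is not an argument: locality is no reason for vanishing (each $Z^{(n)}(0)$ is itself a nonzero diagonal-supported object, which is the whole point of renormalization ambiguities). The actual mechanism is combinatorial: expanding multilinearly, the difference $Z(F+G+H)-Z(F+G)-Z(G+H)+Z(G)$ consists exactly of those terms $Z^{(a+b+c)}(0)\bigl(F^{\otimes a}\otimes G^{\otimes b}\otimes H^{\otimes c}\bigr)$ with $a\geq 1$ \emph{and} $c\geq 1$; every such term pairs a thin-diagonal--supported distribution with arguments satisfying $\supp F\cap\supp H=\emptyset$ and therefore vanishes. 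The partition of unity belongs to the earlier, off-diagonal inductive step, not here. Second, for the forward direction you must show $Z(F)$ lands in $\Fcal_{\loc}$ (resp.\ $\fA_\loc$), and your support lemma $\supp Z(F)\subseteq\supp F$ is strictly weaker than locality; in the cited proof this is where $\ph$-locality (\textbf{S 3}) is used, to reduce to Taylor/polynomial functionals order by order in $\hbar$, after which diagonal support of the derivatives plus field independence yields a genuinely local functional. Relatedly, in your converse argument the support property of a \emph{general} $Z\in\Rcal$ cannot be imported from your lemma (which was proved for the specific $Z=\Scal^{-1}\circ\widehat{\Scal}$); it has to be derived from \textbf{Z 4} and \textbf{Z 5}, which is standard but should be said. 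With these two repairs your outline matches the proof in \cite{BDF}.
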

Let us now discuss symmetries. Again, we follow closely \cite{BDF}. Let $G$ be a subgroup of the automorphism group of $\fA$. Assume that it has a well defined action on $\mathscr{S}$, the space of S-matrices, by
\[
\Scal\mapsto g\circ\Scal\circ g^{-1}\,,
\]
where $\Scal\in\mathscr{S}$, $g\in G$. Since $g\circ\Scal\circ g^{-1}\in\mathscr{S}$, it follows from the Main Theorem of Renormalization that there exists an element $Z(g) \in\mathcal{R}$ such that  
\[
g\circ\Scal\circ g^{-1}=\Scal\circ Z(g)\,.
\]
 We obtain a cocycle in $\mathcal{R}$,
 \begin{equation}
 Z(gh)=Z(g)gZ(h) g^{-1} \ .
 \end{equation}
 The cocycle can be trivialized, i.e. is a coboundary, if there exists an element $Z\in\mathcal{R}$ such that
 \begin{equation}
 Z(g)=Z g Z^{-1}g^{-1} \quad\quad\forall g\in G \,.
 \end{equation}
If this is the case, then
\[
g\circ\Scal\circ g^{-1}=\Scal\circ Z g Z^{-1}g^{-1}\,.
\]
Hence
\[
g\circ\Scal\circ Z\circ g^{-1}=\Scal\circ Z \,,
\]
so the $S$-matrix $\Scal\circ Z$ is $G$-invariant.

The non-triviality of the cocycle corresponds to the existence of anomalies. One of the most prominent examples where the cocycle cannot be trivialized is the action of the scaling transformations.

The scaling transformation is defined first on the level of field configurations $\ph\in\Ecal$ as
 \begin{equation}
(\sigma_\rho \ph)(x)= \rho^{\frac{2-D}{2}}\ph(\rho^{-1} x)\,,\label{def-sigma}
\end{equation}
where $D$ is the dimension of $\MM$. This induces the action on functionals by the pullback $\si_\rho(F)(\ph)\doteq F(\si_\rho(\ph))$ and finally, the action on $\fA$ can be defined by
\[
\si_{\rho}(A)^m=\si_{\rho}(A^{\rho^{-1}m}) \,.
\]
Let now
\begin{equation}
\si_{\rho}\circ S\circ \si_{\rho}^{-1}=S \circ Z(\rho) \,.\label{scalingS}
\end{equation}
$Z(\rho)$ is called the \textit{Gell-Mann Low cocycle} and it satisfies the cocycle condition
\begin{equation}\label{cocycle}
Z(\rho_1\rho_2)=Z(\rho_1)\sigma_{\rho_1} Z(\rho_2)\sigma_{\rho_1}^{-1} \,.
\end{equation}
Typically this cocycle cannot be trivialized. The generator of this cocycle, denoted by $B$ is related to the $\beta$-function known from the physics literature. Following \cite{BDF} we define
\begin{equation}
B\doteq\rho\frac{d}{d\rho}Z(\rho)\Big\vert_{\rho=1} \,,\label{def(B)}
\end{equation} 
The physical $\beta$-function can be obtained from $B$ after one corrects for the ``wave function renormalization'' and ``mass renormalization'' (see \cite[section 6.4]{BDF} for details).

To find $B$ we differentiate \eqref{scalingS} and obtain
\[
\rho\frac{d}{d\rho}(\sigma_{\rho}\circ \Scal\circ\sigma_{\rho}^{-1})(V)\Big\vert_{\rho=1}=\rho\frac{d}{d\rho}(S\circ Z(\rho))(V)\Big\vert_{\rho=1}=\left<S^{(1)}(V),B(V)\right> \ ,
\]
Note that $\left<S^{(1)}(V),.\right>$ is invertible in the sense of formal power series so
\[
B(V)=S^{(1)}(V)^{-1}\circ\rho\frac{d}{d\rho}(\sigma_{\rho}\circ \Scal\circ\sigma_{\rho}^{-1})(V)\Big\vert_{\rho=1}
\]
To compute $B$, first we write it in terms of its Taylor expansion:
\begin{equation}
 B(V)=\sum_{n=0}^\infty\frac{1}{n!}\, \left<B^{(n)}(0),V^{\otimes n}\right>\ ,\label{expansionB}
\end{equation}
where
\[
\left<B^{(n)}(0),V^{\otimes n}\right>=\left. \frac{d^n}{d\lambda^n} B(\lambda V)\right |_{\lambda=0}=\left. \rho \frac{d}{d\rho}\left. \frac{d^n}{d\lambda^n}\, Z(\rho)(\lambda V)\right |_{\lambda=0,\rho=1}\right.
\]
Denote $B^{(n)}(0)\equiv B^{(n)}$. The computation of $B^{(n)}$ amounts to summing up the scaling violations of distributional extensions appearing at order $n$ in construction of time-ordered products. To see that lower orders do not contribute, we use the fact that
\be\label{diffZS}
Z(\rho)^{(n)}(0)=\sigma_\rho\circ \Scal^{(n)}(0)\circ \sigma_{\rho}^{-1}-(\Scal\circ Z_{n-1}(\rho))^{(n)}(0)\,,
\ee
where $Z_n$ is an element of $\Rcal$ defined in terms of its Taylor expansion as
\begin{align}\label{Z-subdiag}
Z_n^{(k)}(0) &\doteq
\begin{cases}
Z^{(k)}(0)\ , & k\le n\ , \\
0\ , & k > n\ .
\end{cases}
\end{align}
The proof of \eqref{diffZS} is provided in \cite{BDF} and relies on the proof of the Main Theorem of Renormalization (Theorem 4.1 in \cite{BDF}). We expand $Z(\rho)^{(n)}(0)$ in terms of Feynman graphs:
\[
Z(\rho)^{(n)}(0)=\sum_{\Gamma\in\Gcal_n} Z(\rho)^{\Gamma}\,.
\]
where the sum is over all graphs with $n$ vertices. Similarly for $\Scal^{(n)}(0)$ and $B^{(n)}(0)$. We can rewrite \eqref{diffZS} as
\be\label{graphZS}
Z(\rho)^\Gamma=\sigma_\rho\circ \Tcal^\Gamma\circ \sigma_{\rho}^{-1}-\sum_{P\in \mathrm{Part}'(V(\Gamma))}\Tcal^{\Gamma_P}\circ\bigotimes_{I\in P} Z(\rho)^{\Gamma_I} \,,
\ee
where $\mathrm{Part}'(V(\Gamma))$ denotes the set of partitions of the vertex set $V(\Gamma)$, excluding the partition with $n$ elements; $\Gamma_P$ is the graph with vertex set $V(\Gamma_P)=V(\Gamma)$, with all lines connecting different index sets of the partition $P$, and $\Gamma_I$ is the graph with vertex set $V(\Gamma_I)=I$ and all lines of $\Gamma$ which connect two vertices in $I$. Differentiating \eqref{graphZS} with respect ot $\rho$ gives
\be\label{graphB}
B^\Gamma=
\rho\frac{d}{d\rho}(\sigma_\rho\circ \Tcal^\Gamma\circ \sigma_{\rho}^{-1})\Big\vert_{\rho=1}-\sum_{P\in \mathrm{Part}'(V(\Gamma))}\Tcal^{\Gamma_P}\circ\bigotimes_{I\in P} B^{\Gamma_I} \,,
\ee
Note that $B^\Gamma$ is an operator on $\Fcal_{\loc}^{\otimes n}[[\hbar]]$.

It is now clear that the second term in \eqref{graphZS} subtracts contributions from scaling violations corresponding to renormalization of all proper subgraphs of $\Gamma$. Hence the only contributions to $B^\Gamma$ arise from scaling violations resulting from extending distributions defined everywhere outside the thin diagonal of the graph $\Gamma$.

For performing computations we need to express $V\in\fA$ in terms of a concrete functional in $\Fcal_\loc$. Let's take $V=\al_{\sst M}^{-1} F$ for some $F\in\Floc$.
In the computation of $B$ we have to take into account that $\alpha_{\sst M}$,  does not commute with the scaling transformations. Define
\[\Scal_{\sst M}\doteq\al_{\sst M}\circ \Scal\circ \al_{\sst M}^{-1}
\]
 and 
 \[
 B_{\sst M}\doteq\al_{\sst M}\circ B\circ \al_{\sst M}^{-1}\]
We obtain 
\begin{multline*}
\rho\frac{\partial}{\partial\rho}(\sigma_{\rho}\circ \Scal_{\sst M}\circ\sigma_{\rho}^{-1})(F)-
M\frac{\partial}{\partial M}\Scal_{\sst M}(F)\Big \vert_{\rho=1}= \rho\frac{d}{d\rho}(\sigma_{\rho}\circ \Scal_{\rho^{-1}\sst M}\circ\sigma_{\rho}^{-1})(F)\Big \vert_{\rho=1}\\
=
\left<\Scal_{\sst M}^{(1)}(F),B_{\sst M}(F)\right>\,. 
\end{multline*}
for $V\in\mathcal{F}_{loc}$. The expression for $-
M\frac{\partial}{\partial M}\Scal_{\sst M}$ was derived in \cite{BDF} and is given by 
\[
M\frac{\partial}{\partial M}S_{\sst M}^{(n)}=2\hbar\,\, S_{\sst M}^{(n)}\circ  \sum_{i\ne j}D_v^{ij}\,,
\]
where  $D_v^{ij}\doteq\frac12\left< v,\frac{\delta^2}{\delta\varphi_i\delta\varphi_j}\right>$ is a 
functional differential operator on $\mathcal{F}_\loc^{\otimes n}$ and $v\doteq \frac{1}{2}M\frac{d}{dM} H_{\sst M}^m$.

Again, $B_{\sst M}$ can be written in terms of its Taylor expansion and $B_{\sst M}^{(n)}(0)$ is expressed as a sum over graphs with $n$ vertices. Finally, note that due to the field independence of $\Scal$ and $Z$, we have

\[
\frac{\delta^n}{\delta\ph^n}\circ B_{\sst M}(F)=\sum_{P\in\operatorname{Part}(n)}
B_{\sst M}^{(|P|)}\circ\bigotimes_{I\in P} F^{|I|}\ .
\] 
It follows that the Taylor expansion of $B_{\sst M}(F)$ around $\ph=0$ is determined by the values of $B_{\sst M}^{(k)}(F^{(n_1)}\otimes\cdots\otimes F^{(n_k)})$ at $\ph=0$, where $n_1+\dots+n_k=n$. We will see now that this allows to express everything in terms of connected graphs.

Let $F\in\Floc$. Without loss of generality we can assume $F$ to be monomial, i.e. of the form
\be\label{localF}
F(\ph)=\int_{\MM} f(x)p(j_x(\ph))d^Dx\,,
\ee
where $f\in\Dcal$ and $p$ is a monomial function on the jet space and $j_x(\ph)$ is a finite order jet of $\ph$ at point $x$. Graphically, we can represent $F$ as a vertex, decorated by $f$ with one external leg for each factor of $\ph$, some of them carrying derivatives. For example $\int_{\MM} f(x)\ph^4(x)d^Dx$ is
\begin{center}
	\begin{tikzpicture}[x=1.00mm, y=1.00mm, inner xsep=0pt, inner ysep=0pt, outer xsep=0pt, outer ysep=0pt]
	\path[line width=0mm] (83.00,80.92) rectangle +(18.67,16.08);
	\definecolor{L}{rgb}{0,0,0}
	\definecolor{F}{rgb}{0,0,0}
	\path[line width=0.30mm, draw=L, fill=F] (89.87,90.05) circle (0.54mm);
	\path[line width=0.30mm, draw=L] (95.00,95.00) -- (85.00,85.00);
	\path[line width=0.30mm, draw=L] (85.00,95.00) -- (95.00,85.00);
	\draw(88.00,84.00) node[anchor=base west]{\fontsize{11.23}{17.07}\selectfont $f$};
	\end{tikzpicture}
\end{center}
Given a monomial $p$ on the jet space, define the set of Wick submonomials $W_p$ as the set of all monomials that are factors of $p$. For example, for $\ph^4(x)$, the set of Wick submonomials consists of  $\ph^4(x)$, $\ph^3(x)$,  $\ph^2(x)$,  $\ph(x)$, 1. To indicate derivatives, we put lines across edges, e.g.
 $p(j_x(\ph))=\partial_\mu\ph\partial_\nu\ph$ is
\begin{center}
	\begin{tikzpicture}[x=1.00mm, y=1.00mm, inner xsep=0pt, inner ysep=0pt, outer xsep=0pt, outer ysep=0pt]
	\useasboundingbox  (86.00,88.92) rectangle +(7,5.08);
	\path[line width=0mm] (81.54,83.02) rectangle +(16.96,13.98);
	\definecolor{L}{rgb}{0,0,0}
	\definecolor{F}{rgb}{0,0,0}
	\path[line width=0.30mm, draw=L, fill=F] (89.87,90.05) circle (0.54mm);
	\path[line width=0.30mm, draw=L] (95.00,95.00) -- (90.00,90.00);
	\path[line width=0.30mm, draw=L] (85.00,95.00) -- (90.00,90.00);
	\draw(87.32,85.89) node[anchor=base west]{\fontsize{11.38}{13.66}\selectfont $f$};
	\path[line width=0.30mm, draw=L] (87.00,91.00);
	\path[line width=0.30mm, draw=L] (87.00,91.00) -- (89.00,93.00);
	\path[line width=0.30mm, draw=L] (91.00,93.00) -- (93.00,91.00);
	\draw(92.39,89.59) node[anchor=base west]{\fontsize{6}{6.83}\selectfont $\mu$};
	\draw(85.54,89.75) node[anchor=base west]{\fontsize{6}{6.83}\selectfont $\nu$};
	\end{tikzpicture}
\end{center}
and after summing up over the index $\mu$ we obtain $\partial_\mu\ph\partial^{\mu}\ph\equiv (\partial \ph)^2$ represented for simplicity by
\begin{center}
	\begin{tikzpicture}[x=1.00mm, y=1.00mm, inner xsep=0pt, inner ysep=0pt, outer xsep=0pt, outer ysep=0pt]
	\useasboundingbox  (86.00,88.92) rectangle +(7,5.08);
	\path[line width=0mm] (81.54,83.02) rectangle +(16.96,13.98);
	\definecolor{L}{rgb}{0,0,0}
	\definecolor{F}{rgb}{0,0,0}
	\path[line width=0.30mm, draw=L, fill=F] (89.87,90.05) circle (0.54mm);
	\path[line width=0.30mm, draw=L] (95.00,95.00) -- (90.00,90.00);
	\path[line width=0.30mm, draw=L] (85.00,95.00) -- (90.00,90.00);
	\draw(87.32,85.89) node[anchor=base west]{\fontsize{11.38}{13.66}\selectfont $f$};
	\path[line width=0.30mm, draw=L] (87.00,91.00);
	\path[line width=0.30mm, draw=L] (87.00,91.00) -- (89.00,93.00);
	\path[line width=0.30mm, draw=L] (91.00,93.00) -- (93.00,91.00);
	\end{tikzpicture}
\end{center}
The Taylor expansion induces a coproduct
\[
p(j_x(\ph+\psi))=\Delta(p)(j_x(\ph)\otimes j_x(\psi))\,,
\]
which can be written explicitly as
\[
\Delta(p)=\sum_{q\in W_p} \operatorname{Sym}(q)\, p/q\otimes q\,,
\]
where $p/q$ is the graph obtained by removing the edges corresponding to $q$ and $\operatorname{Sym}(q)$ is the number of ways in which graph $q$ can be embedded into graph $p$.
For the local functional $F$ in \eqref{localF} we obtain
\[
F(\ph+\psi)=\int_{\MM}f(x)\Delta p(j_x(\ph)\otimes j_x(\psi)) d^Dx\,.
\]
Using Sweedler's notation:
\[
\Delta p=\sum_p p_{(1)}\otimes p_{(2)}\,.
\]
By a small abuse of notation, we define a functional $F_{(1)}(\ph)\doteq \int_{\MM} f(x)p_{(1)}(j_x(\ph))d^Dx$, while $F_{(2)}(\ph)(x)$ is a smooth function defined by $x\mapsto p_{(2)}(j_x(\ph))$. Using this notation:
\[
B^{(n)}_{\sst M}(F_1,\dots,F_n)(\ph)=\sum_{F_1,\dots,F_n}\left<B^{(n)}_{\sst M}({F_1}_{(1)},\dots,{F_n}_{(1)})(0),{F_1}_{(2)},\dots,{F_n}_{(2)}\right>\,.
\]
Here $B^{(n)}_{\sst M}({F_1}_{(1)},\dots,{F_n}_{(1)})(0)$ is a distribution, which we can write as
\[
B^{(n)}_{\sst M}({F_1}_{(1)},\dots,{F_n}_{(1)})(0)(x_1,\dots,x_n)=f_1(x_1)\dots f_n(x_n) \sum_{\Gamma} b^{\Gamma}(x_1,\dots,x_n)\,,
\]
where the sum runs over connected graphs $\Gamma$ with vertices representing ${p_1}_{(1)},\dots, {p_n}_{(1)}$. Distributions $b^\Gamma$ are given by
\[
b^\Gamma=\rho \frac{d}{d\rho}\sigma_\rho(\overline{u}^{\Gamma})\Big\vert_{\rho=1}\,,
\]
where $\overline{u}^{\Gamma}$ is the extension to the total diagonal of the distribution $\tilde{u}^\Gamma$ constructed as in section \ref{resid}, where all the proper subgraphs have been renormalized. Hence
\be
B^{(n)}_{\sst M}(F_1,\dots,F_n)(\ph)=\sum_{F_1,\dots,F_n}\sum_{\Gamma}\left< (f_1\otimes\dots\otimes f_n)\cdot b^{\Gamma},{F_1}_{(2)},\dots,{F_n}_{(2)}\right>\,.
\ee

If $\Gamma$ is EG primitive, then $\tilde{u}^\Gamma=u^{\Gamma}$ and $u^{\Gamma}$ scales homogeneously. In this case
\[
b^\Gamma=\Res u^\Gamma\,.
\]
This result provides a link between Kontsevich-Zagier periods appearing in Proposition~\ref{mainprop} and physical quantities computed in the pAQFT framework.
 However, the class of distributional residues relevant for the computation of $B$ is larger than the ones discussed in section \ref{resid}, since here we need to replace $\DF$ with $H^{\mathrm{F}}$ given by the formula \eqref{Feynmanprop}. To give an idea of how the computation proceeds at low loop orders, we review the example of $\ph^4$ in 4 dimensions discussed in \cite{BDF}, but in contrast to \cite{BDF} we use the Feynman graphs notation to make it easier to follow.
\begin{example}
Consider the functional
\[
F(\ph)=\lambda\int_{\MM}f(x)\ph^4(x)d^4x\,.
\]
The corresponding element of $\fA$ is
\[
V=\alpha^{-1}_{\sst M} F\,,
\]
i.e. 
\[
V(m)_{\sst M}=\lambda\, \al^{-1}_{H_{\sst M}^m}\int_{\MM}f(x)\ph^4(x)d^4x\,.
\]
We are interested in finding $B_{\sst M}$ for the QFT model with this interaction. First note that the orbit of the renormalization group is spanned by $1$ and functionals of the form
$\int_\MM f_1(x)\ph^4(x)d^4x$, $\int_\MM f_2(x)\ph^2(x)d^4x$, $\int_\MM f_3(x)(\partial\ph)^2(x)d^4x$, where $f_1,f_2,f_3\in\Dcal$. Hence, we need to determine $B_{\sst M}$ only on such functionals. Graphically we represent them as decorated vertices:
\begin{center}
\begin{tikzpicture}[x=1.00mm, y=1.00mm, inner xsep=0pt, inner ysep=0pt, outer xsep=0pt, outer ysep=0pt]
\path[line width=0mm] (83.00,80.92) rectangle +(18.67,16.08);
\definecolor{L}{rgb}{0,0,0}
\definecolor{F}{rgb}{0,0,0}
\path[line width=0.30mm, draw=L, fill=F] (89.87,90.05) circle (0.54mm);
\path[line width=0.30mm, draw=L] (95.00,95.00) -- (85.00,85.00);
\path[line width=0.30mm, draw=L] (85.00,95.00) -- (95.00,85.00);
\draw(88.00,84.00) node[anchor=base west]{\fontsize{14.23}{17.07}\selectfont $f_1$};
\end{tikzpicture}
\begin{tikzpicture}[x=1.00mm, y=1.00mm, inner xsep=0pt, inner ysep=0pt, outer xsep=0pt, outer ysep=0pt]
\path[line width=0mm] (83.00,80.92) rectangle +(18.67,16.08);
\definecolor{L}{rgb}{0,0,0}
\definecolor{F}{rgb}{0,0,0}
\path[line width=0.30mm, draw=L, fill=F] (89.87,90.05) circle (0.54mm);
\path[line width=0.30mm, draw=L] (95.00,95.00) -- (90.00,90.00);
\path[line width=0.30mm, draw=L] (85.00,95.00) -- (90.00,90.00);
\draw(88.00,84.00) node[anchor=base west]{\fontsize{14.23}{17.07}\selectfont $f_2$};
\end{tikzpicture} 
\begin{tikzpicture}[x=1.00mm, y=1.00mm, inner xsep=0pt, inner ysep=0pt, outer xsep=0pt, outer ysep=0pt]
\path[line width=0mm] (83.00,80.92) rectangle +(18.67,16.08);
\definecolor{L}{rgb}{0,0,0}
\definecolor{F}{rgb}{0,0,0}
\path[line width=0.30mm, draw=L, fill=F] (89.87,90.05) circle (0.54mm);
\path[line width=0.30mm, draw=L] (95.00,95.00) -- (90.00,90.00);
\path[line width=0.30mm, draw=L] (85.00,95.00) -- (90.00,90.00);
\draw(88.00,84.00) node[anchor=base west]{\fontsize{14.23}{17.07}\selectfont $f_3$};
\path[line width=0.30mm, draw=L] (87.00,91.00);
\path[line width=0.30mm, draw=L] (87.00,91.00) -- (89.00,93.00);
\path[line width=0.30mm, draw=L] (91.00,93.00) -- (93.00,91.00);
\end{tikzpicture}
\end{center}
Let us now compute $B^{(2)}_{\sst M}$ on these functionals. We have
\begin{multline}\label{B2M}
B^{(2)}_{\sst M}\left(\begin{tikzpicture}[scale=0.7,x=1.00mm, y=1.00mm, inner xsep=0pt, inner ysep=0pt, outer xsep=0pt, outer ysep=0pt]
\useasboundingbox  (83.00,88.92) rectangle +(14.67,5.08);
\definecolor{L}{rgb}{0,0,0}
\definecolor{F}{rgb}{0,0,0}
\path[line width=0.30mm, draw=L, fill=F, anchor=base] (89.87,90.05) circle (0.54mm);
\path[line width=0.30mm, draw=L] (95.00,95.00) -- (85.00,85.00);
\path[line width=0.30mm, draw=L] (85.00,95.00) -- (95.00,85.00);
\draw(87.00,84.00) node[anchor=base west]{\fontsize{11.23}{17.07}\selectfont $\footnotesize{f_1}$};
\end{tikzpicture},%
\begin{tikzpicture}[scale=0.7,x=1.00mm, y=1.00mm, inner xsep=0pt, inner ysep=0pt, outer xsep=0pt, outer ysep=0pt]
\useasboundingbox  (83.00,88.92) rectangle +(14.67,5.08);
\definecolor{L}{rgb}{0,0,0}
\definecolor{F}{rgb}{0,0,0}
\path[line width=0.30mm, draw=L, fill=F, anchor=base] (89.87,90.05) circle (0.54mm);
\path[line width=0.30mm, draw=L] (95.00,95.00) -- (85.00,85.00);
\path[line width=0.30mm, draw=L] (85.00,95.00) -- (95.00,85.00);
\draw(87.00,84.00) node[anchor=base west]{\fontsize{11.23}{17.07}\selectfont $\footnotesize{f_1}$};
\end{tikzpicture}
\right)=16\left<B^{(2)}_{\sst M}\left(%
\begin{tikzpicture}[scale=0.7,x=1.00mm, y=1.00mm, inner xsep=0pt, inner ysep=0pt, outer xsep=0pt, outer ysep=0pt]
\useasboundingbox  (83.00,88.92) rectangle +(14.67,5.08);
\definecolor{L}{rgb}{0,0,0}
\definecolor{F}{rgb}{0,0,0}
\path[line width=0.30mm, draw=L, fill=F] (89.87,90.05) circle (0.54mm);
\path[line width=0.30mm, draw=L] (95.00,95.00) -- (90.00,90.00);
\path[line width=0.30mm, draw=L] (85.00,95.00) -- (90.00,90.00);
\draw(87.00,84.00) node[anchor=base west]{\fontsize{11.23}{17.07}\selectfont $\scriptsize{f_1}$};
\path[line width=0.30mm, draw=L] (90.00,95.00) -- (90.00,90.50);
\end{tikzpicture},%
\begin{tikzpicture}[scale=0.7,x=1.00mm, y=1.00mm, inner xsep=0pt, inner ysep=0pt, outer xsep=0pt, outer ysep=0pt]
\useasboundingbox  (83.00,88.92) rectangle +(14.67,5.08);
\definecolor{L}{rgb}{0,0,0}
\definecolor{F}{rgb}{0,0,0}
\path[line width=0.30mm, draw=L, fill=F] (89.87,90.05) circle (0.54mm);
\path[line width=0.30mm, draw=L] (95.00,95.00) -- (90.00,90.00);
\path[line width=0.30mm, draw=L] (85.00,95.00) -- (90.00,90.00);
\draw(87.00,84.00) node[anchor=base west]{\fontsize{11.23}{17.07}\selectfont $\scriptsize{f_1}$};
\path[line width=0.30mm, draw=L] (90.00,95.00) -- (90.00,90.50);
\end{tikzpicture}
\right)(0),\begin{tikzpicture}[scale=0.7,x=1.00mm, y=1.00mm, inner xsep=0pt, inner ysep=0pt, outer xsep=0pt, outer ysep=0pt]
\useasboundingbox  (86.00,88.92) rectangle +(7,5.08);
\definecolor{L}{rgb}{0,0,0}
\definecolor{F}{rgb}{0,0,0}
\path[line width=0.30mm, draw=L, fill=F] (89.96,89.99) circle (0.54mm);
\path[line width=0.30mm, draw=L] (90.00,95.00) -- (90.00,90.50);
\end{tikzpicture}\otimes%
\begin{tikzpicture}[scale=0.7,x=1.00mm, y=1.00mm, inner xsep=0pt, inner ysep=0pt, outer xsep=0pt, outer ysep=0pt]
\useasboundingbox  (87.5,88.92) rectangle +(8,5.08);
\definecolor{L}{rgb}{0,0,0}
\definecolor{F}{rgb}{0,0,0}
\path[line width=0.30mm, draw=L, fill=F] (89.96,89.99) circle (0.54mm);
\path[line width=0.30mm, draw=L] (90.00,95.00) -- (90.00,90.50);
\end{tikzpicture}%
\right>+\\
36\left<B^{(2)}_{\sst M}\left(%
\begin{tikzpicture}[scale=0.7,x=1.00mm, y=1.00mm, inner xsep=0pt, inner ysep=0pt, outer xsep=0pt, outer ysep=0pt]
\useasboundingbox  (83.00,88.92) rectangle +(14.67,5.08);
\definecolor{L}{rgb}{0,0,0}
\definecolor{F}{rgb}{0,0,0}
\path[line width=0.30mm, draw=L, fill=F] (89.87,90.05) circle (0.54mm);
\path[line width=0.30mm, draw=L] (95.00,95.00) -- (90.00,90.00);
\path[line width=0.30mm, draw=L] (85.00,95.00) -- (90.00,90.00);
\draw(87.00,84.00) node[anchor=base west]{\fontsize{11.23}{17.07}\selectfont $\scriptsize{f_1}$};
\end{tikzpicture},%
\begin{tikzpicture}[scale=0.7,x=1.00mm, y=1.00mm, inner xsep=0pt, inner ysep=0pt, outer xsep=0pt, outer ysep=0pt]
\useasboundingbox  (83.00,88.92) rectangle +(14.67,5.08);
\definecolor{L}{rgb}{0,0,0}
\definecolor{F}{rgb}{0,0,0}
\path[line width=0.30mm, draw=L, fill=F] (89.87,90.05) circle (0.54mm);
\path[line width=0.30mm, draw=L] (95.00,95.00) -- (90.00,90.00);
\path[line width=0.30mm, draw=L] (85.00,95.00) -- (90.00,90.00);
\draw(87.00,84.00) node[anchor=base west]{\fontsize{11.23}{17.07}\selectfont $\scriptsize{f_1}$};
\end{tikzpicture}
\right)(0),\begin{tikzpicture}[scale=0.7,x=1.00mm, y=1.00mm, inner xsep=0pt, inner ysep=0pt, outer xsep=0pt, outer ysep=0pt]
\useasboundingbox  (84.00,88.92) rectangle +(10,5.08);
\definecolor{L}{rgb}{0,0,0}
\definecolor{F}{rgb}{0,0,0}
\path[line width=0.30mm, draw=L, fill=F] (89.87,90.05) circle (0.54mm);
\path[line width=0.30mm, draw=L] (95.00,95.00) -- (90.00,90.00);
\path[line width=0.30mm, draw=L] (85.00,95.00) -- (90.00,90.00);
\end{tikzpicture}\otimes%
\begin{tikzpicture}[scale=0.7,x=1.00mm, y=1.00mm, inner xsep=0pt, inner ysep=0pt, outer xsep=0pt, outer ysep=0pt]
\useasboundingbox  (85.05,88.92) rectangle +(12,5.08);
\definecolor{L}{rgb}{0,0,0}
\definecolor{F}{rgb}{0,0,0}
\path[line width=0.30mm, draw=L, fill=F] (89.87,90.05) circle (0.54mm);
\path[line width=0.30mm, draw=L] (95.00,95.00) -- (90.00,90.00);
\path[line width=0.30mm, draw=L] (85.00,95.00) -- (90.00,90.00);
\end{tikzpicture}%
\right>+\textrm{constant and linear terms}
\,,
\end{multline}
since the co-product acts as:
\[
\Delta\left(\begin{tikzpicture}[scale=0.7,x=1.00mm, y=1.00mm, inner xsep=0pt, inner ysep=0pt, outer xsep=0pt, outer ysep=0pt]
\useasboundingbox  (83.00,88.92) rectangle +(14.67,5.08);
\definecolor{L}{rgb}{0,0,0}
\definecolor{F}{rgb}{0,0,0}
\path[line width=0.30mm, draw=L, fill=F, anchor=base] (89.87,90.05) circle (0.54mm);
\path[line width=0.30mm, draw=L] (95.00,95.00) -- (85.00,85.00);
\path[line width=0.30mm, draw=L] (85.00,95.00) -- (95.00,85.00);
\end{tikzpicture}\right)=1\otimes \begin{tikzpicture}[scale=0.7,x=1.00mm, y=1.00mm, inner xsep=0pt, inner ysep=0pt, outer xsep=0pt, outer ysep=0pt]
\useasboundingbox  (83.00,88.92) rectangle +(14.67,5.08);
\definecolor{L}{rgb}{0,0,0}
\definecolor{F}{rgb}{0,0,0}
\path[line width=0.30mm, draw=L, fill=F, anchor=base] (89.87,90.05) circle (0.54mm);
\path[line width=0.30mm, draw=L] (95.00,95.00) -- (85.00,85.00);
\path[line width=0.30mm, draw=L] (85.00,95.00) -- (95.00,85.00);
\end{tikzpicture}+\begin{tikzpicture}[scale=0.7,x=1.00mm, y=1.00mm, inner xsep=0pt, inner ysep=0pt, outer xsep=0pt, outer ysep=0pt]
\useasboundingbox  (83.00,88.92) rectangle +(14.67,5.08);
\definecolor{L}{rgb}{0,0,0}
\definecolor{F}{rgb}{0,0,0}
\path[line width=0.30mm, draw=L, fill=F, anchor=base] (89.87,90.05) circle (0.54mm);
\path[line width=0.30mm, draw=L] (95.00,95.00) -- (85.00,85.00);
\path[line width=0.30mm, draw=L] (85.00,95.00) -- (95.00,85.00);
\end{tikzpicture}\otimes 1+4\begin{tikzpicture}[scale=0.7,x=1.00mm, y=1.00mm, inner xsep=0pt, inner ysep=0pt, outer xsep=0pt, outer ysep=0pt]
\useasboundingbox  (83.00,88.92) rectangle +(13,5.08);
\definecolor{L}{rgb}{0,0,0}
\definecolor{F}{rgb}{0,0,0}
\path[line width=0.30mm, draw=L, fill=F] (89.87,90.05) circle (0.54mm);
\path[line width=0.30mm, draw=L] (95.00,95.00) -- (90.00,90.00);
\path[line width=0.30mm, draw=L] (85.00,95.00) -- (90.00,90.00);
\path[line width=0.30mm, draw=L] (90.00,95.00) -- (90.00,90.50);
\end{tikzpicture}\otimes%
\begin{tikzpicture}[scale=0.7,x=1.00mm, y=1.00mm, inner xsep=0pt, inner ysep=0pt, outer xsep=0pt, outer ysep=0pt]
\useasboundingbox  (86.00,88.92) rectangle +(9,5.08);
\definecolor{L}{rgb}{0,0,0}
\definecolor{F}{rgb}{0,0,0}
\path[line width=0.30mm, draw=L, fill=F] (89.87,90.05) circle (0.54mm);
\path[line width=0.30mm, draw=L] (90.00,95.00) -- (90.00,90.50);
\end{tikzpicture}+%
4\begin{tikzpicture}[scale=0.7,x=1.00mm, y=1.00mm, inner xsep=0pt, inner ysep=0pt, outer xsep=0pt, outer ysep=0pt]
\useasboundingbox  (84.00,88.92) rectangle +(11,5.08);
\definecolor{L}{rgb}{0,0,0}
\definecolor{F}{rgb}{0,0,0}
\path[line width=0.30mm, draw=L, fill=F] (89.87,90.05) circle (0.54mm);
\path[line width=0.30mm, draw=L] (90.00,95.00) -- (90.00,90.50);
\end{tikzpicture}\otimes%
\begin{tikzpicture}[scale=0.7,x=1.00mm, y=1.00mm, inner xsep=0pt, inner ysep=0pt, outer xsep=0pt, outer ysep=0pt]
\useasboundingbox  (83.00,88.92) rectangle +(13,5.08);
\definecolor{L}{rgb}{0,0,0}
\definecolor{F}{rgb}{0,0,0}
\path[line width=0.30mm, draw=L, fill=F] (89.87,90.05) circle (0.54mm);
\path[line width=0.30mm, draw=L] (95.00,95.00) -- (90.00,90.00);
\path[line width=0.30mm, draw=L] (85.00,95.00) -- (90.00,90.00);
\path[line width=0.30mm, draw=L] (90.00,95.00) -- (90.00,90.50);
\end{tikzpicture}%
+6\begin{tikzpicture}[scale=0.7,x=1.00mm, y=1.00mm, inner xsep=0pt, inner ysep=0pt, outer xsep=0pt, outer ysep=0pt]
\useasboundingbox  (83.00,88.92) rectangle +(13,5.08);
\definecolor{L}{rgb}{0,0,0}
\definecolor{F}{rgb}{0,0,0}
\path[line width=0.30mm, draw=L, fill=F] (89.87,90.05) circle (0.54mm);
\path[line width=0.30mm, draw=L] (95.00,95.00) -- (90.00,90.00);
\path[line width=0.30mm, draw=L] (85.00,95.00) -- (90.00,90.00);
\end{tikzpicture}\otimes%
\begin{tikzpicture}[scale=0.7,x=1.00mm, y=1.00mm, inner xsep=0pt, inner ysep=0pt, outer xsep=0pt, outer ysep=0pt]
\useasboundingbox  (84.00,88.92) rectangle +(9,5.08);
\definecolor{L}{rgb}{0,0,0}
\definecolor{F}{rgb}{0,0,0}
\path[line width=0.30mm, draw=L, fill=F] (89.87,90.05) circle (0.54mm);
\path[line width=0.30mm, draw=L] (95.00,95.00) -- (90.00,90.00);
\path[line width=0.30mm, draw=L] (85.00,95.00) -- (90.00,90.00);
\end{tikzpicture}
\]
It follows from \eqref{B2M} now the graphs contributing to $B_{\sst M}^{(2)}$ are
\[
\Gamma_1=\begin{tikzpicture}[scale=0.7,x=1.00mm, y=1.00mm, inner xsep=0pt, inner ysep=0pt, outer xsep=0pt, outer ysep=0pt]
\useasboundingbox (84.00,88) rectangle +(20,5.08);
\path[line width=0mm] (82.33,85.75) rectangle +(20.09,8.50);
\definecolor{L}{rgb}{0,0,0}
\definecolor{F}{rgb}{0,0,0}
\path[line width=0.30mm, draw=L, fill=F] (84.87,90.05) circle (0.54mm);
\path[line width=0.30mm, draw=L, fill=F] (99.87,90.05) circle (0.54mm);
\path[line width=0.30mm, draw=L] (85.00,90.00) .. controls (89.50,87.00) and (95.50,87.00) .. (100.00,90.00);
\path[line width=0.30mm, draw=L] (85.00,90.00) .. controls (89.50,93.00) and (95.50,93.00) .. (100.00,90.00);
\path[line width=0.30mm, draw=L] (85.00,90.00);
\path[line width=0.30mm, draw=L] (85.00,90.00) -- (100.00,90.00);
\end{tikzpicture}\!\!,\quad
\Gamma_2=\begin{tikzpicture}[scale=0.7,x=1.00mm, y=1.00mm, inner xsep=0pt, inner ysep=0pt, outer xsep=0pt, outer ysep=0pt]
\useasboundingbox  (84.00,88) rectangle +(9,5.08);
\path[line width=0mm] (82.33,85.75) rectangle +(20.09,8.50);
\definecolor{L}{rgb}{0,0,0}
\definecolor{F}{rgb}{0,0,0}
\path[line width=0.30mm, draw=L, fill=F] (84.87,90.05) circle (0.54mm);
\path[line width=0.30mm, draw=L, fill=F] (99.87,90.05) circle (0.54mm);
\path[line width=0.30mm, draw=L] (85.00,90.00) .. controls (89.50,87.00) and (95.50,87.00) .. (100.00,90.00);
\path[line width=0.30mm, draw=L] (85.00,90.00) .. controls (89.50,93.00) and (95.50,93.00) .. (100.00,90.00);
\end{tikzpicture}
\] 
Hence, neglecting constant and linear terms:
\[
B^{(2)}_{\sst M}\left(\begin{tikzpicture}[scale=0.7,x=1.00mm, y=1.00mm, inner xsep=0pt, inner ysep=0pt, outer xsep=0pt, outer ysep=0pt]
\useasboundingbox  (83.00,88.92) rectangle +(14.67,5.08);
\definecolor{L}{rgb}{0,0,0}
\definecolor{F}{rgb}{0,0,0}
\path[line width=0.30mm, draw=L, fill=F, anchor=base] (89.87,90.05) circle (0.54mm);
\path[line width=0.30mm, draw=L] (95.00,95.00) -- (85.00,85.00);
\path[line width=0.30mm, draw=L] (85.00,95.00) -- (95.00,85.00);
\draw(87.00,84.00) node[anchor=base west]{\fontsize{11.23}{17.07}\selectfont $\footnotesize{f_1}$};
\end{tikzpicture},%
\begin{tikzpicture}[scale=0.7,x=1.00mm, y=1.00mm, inner xsep=0pt, inner ysep=0pt, outer xsep=0pt, outer ysep=0pt]
\useasboundingbox  (83.00,88.92) rectangle +(14.67,5.08);
\definecolor{L}{rgb}{0,0,0}
\definecolor{F}{rgb}{0,0,0}
\path[line width=0.30mm, draw=L, fill=F, anchor=base] (89.87,90.05) circle (0.54mm);
\path[line width=0.30mm, draw=L] (95.00,95.00) -- (85.00,85.00);
\path[line width=0.30mm, draw=L] (85.00,95.00) -- (95.00,85.00);
\draw(87.00,84.00) node[anchor=base west]{\fontsize{11.23}{17.07}\selectfont $\footnotesize{f_1}$};
\end{tikzpicture}
\right)=\left<(f_1\otimes f_1)\cdot b^{\Gamma_1},\begin{tikzpicture}[scale=0.7,x=1.00mm, y=1.00mm, inner xsep=0pt, inner ysep=0pt, outer xsep=0pt, outer ysep=0pt]
\useasboundingbox  (86.00,88.92) rectangle +(7,5.08);
\definecolor{L}{rgb}{0,0,0}
\definecolor{F}{rgb}{0,0,0}
\path[line width=0.30mm, draw=L, fill=F] (89.96,89.99) circle (0.54mm);
\path[line width=0.30mm, draw=L] (90.00,95.00) -- (90.00,90.50);
\end{tikzpicture}\otimes%
\begin{tikzpicture}[scale=0.7,x=1.00mm, y=1.00mm, inner xsep=0pt, inner ysep=0pt, outer xsep=0pt, outer ysep=0pt]
\useasboundingbox  (87.5,88.92) rectangle +(8,5.08);
\definecolor{L}{rgb}{0,0,0}
\definecolor{F}{rgb}{0,0,0}
\path[line width=0.30mm, draw=L, fill=F] (89.96,89.99) circle (0.54mm);
\path[line width=0.30mm, draw=L] (90.00,95.00) -- (90.00,90.50);
\end{tikzpicture}%
\right>+\left<(f_1\otimes f_1)\cdot b^{\Gamma_2},\begin{tikzpicture}[scale=0.7,x=1.00mm, y=1.00mm, inner xsep=0pt, inner ysep=0pt, outer xsep=0pt, outer ysep=0pt]
\useasboundingbox  (84.00,88.92) rectangle +(10,5.08);
\definecolor{L}{rgb}{0,0,0}
\definecolor{F}{rgb}{0,0,0}
\path[line width=0.30mm, draw=L, fill=F] (89.87,90.05) circle (0.54mm);
\path[line width=0.30mm, draw=L] (95.00,95.00) -- (90.00,90.00);
\path[line width=0.30mm, draw=L] (85.00,95.00) -- (90.00,90.00);
\end{tikzpicture}\otimes%
\begin{tikzpicture}[scale=0.7,x=1.00mm, y=1.00mm, inner xsep=0pt, inner ysep=0pt, outer xsep=0pt, outer ysep=0pt]
\useasboundingbox  (85.05,88.92) rectangle +(12,5.08);
\definecolor{L}{rgb}{0,0,0}
\definecolor{F}{rgb}{0,0,0}
\path[line width=0.30mm, draw=L, fill=F] (89.87,90.05) circle (0.54mm);
\path[line width=0.30mm, draw=L] (95.00,95.00) -- (90.00,90.00);
\path[line width=0.30mm, draw=L] (85.00,95.00) -- (90.00,90.00);
\end{tikzpicture}\right>\,.
\]
A similar reasoning leads to
\[
B^{(2)}_{\sst M}\left(\begin{tikzpicture}[scale=0.7,x=1.00mm, y=1.00mm, inner xsep=0pt, inner ysep=0pt, outer xsep=0pt, outer ysep=0pt]
\useasboundingbox  (83.00,88.92) rectangle +(14.67,5.08);
\definecolor{L}{rgb}{0,0,0}
\definecolor{F}{rgb}{0,0,0}
\path[line width=0.30mm, draw=L, fill=F, anchor=base] (89.87,90.05) circle (0.54mm);
\path[line width=0.30mm, draw=L] (95.00,95.00) -- (85.00,85.00);
\path[line width=0.30mm, draw=L] (85.00,95.00) -- (95.00,85.00);
\draw(87.00,84.00) node[anchor=base west]{\fontsize{11.23}{17.07}\selectfont $\footnotesize{f_1}$};
\end{tikzpicture},%
\begin{tikzpicture}[scale=0.7,x=1.00mm, y=1.00mm, inner xsep=0pt, inner ysep=0pt, outer xsep=0pt, outer ysep=0pt]
\useasboundingbox  (83.00,88.92) rectangle +(14.67,5.08);
\definecolor{L}{rgb}{0,0,0}
\definecolor{F}{rgb}{0,0,0}
\path[line width=0.30mm, draw=L, fill=F] (89.87,90.05) circle (0.54mm);
\path[line width=0.30mm, draw=L] (95.00,95.00) -- (90.00,90.00);
\path[line width=0.30mm, draw=L] (85.00,95.00) -- (90.00,90.00);
\draw(87.00,84.00) node[anchor=base west]{\fontsize{11.23}{17.07}\selectfont $\scriptsize{f_2}$};
\end{tikzpicture}
\right)=6\left<B^{(2)}_{\sst M}\left(%
\begin{tikzpicture}[scale=0.7,x=1.00mm, y=1.00mm, inner xsep=0pt, inner ysep=0pt, outer xsep=0pt, outer ysep=0pt]
\useasboundingbox  (83.00,88.92) rectangle +(14.67,5.08);
\definecolor{L}{rgb}{0,0,0}
\definecolor{F}{rgb}{0,0,0}
\path[line width=0.30mm, draw=L, fill=F] (89.87,90.05) circle (0.54mm);
\path[line width=0.30mm, draw=L] (95.00,95.00) -- (90.00,90.00);
\path[line width=0.30mm, draw=L] (85.00,95.00) -- (90.00,90.00);
\draw(87.00,84.00) node[anchor=base west]{\fontsize{11.23}{17.07}\selectfont $\scriptsize{f_1}$};
\end{tikzpicture},%
\begin{tikzpicture}[scale=0.7,x=1.00mm, y=1.00mm, inner xsep=0pt, inner ysep=0pt, outer xsep=0pt, outer ysep=0pt]
\useasboundingbox  (83.00,88.92) rectangle +(14.67,5.08);
\definecolor{L}{rgb}{0,0,0}
\definecolor{F}{rgb}{0,0,0}
\path[line width=0.30mm, draw=L, fill=F] (89.87,90.05) circle (0.54mm);
\path[line width=0.30mm, draw=L] (95.00,95.00) -- (90.00,90.00);
\path[line width=0.30mm, draw=L] (85.00,95.00) -- (90.00,90.00);
\draw(87.00,84.00) node[anchor=base west]{\fontsize{11.23}{17.07}\selectfont $\scriptsize{f_2}$};
\end{tikzpicture}
\right)(0),\begin{tikzpicture}[scale=0.7,x=1.00mm, y=1.00mm, inner xsep=0pt, inner ysep=0pt, outer xsep=0pt, outer ysep=0pt]
\useasboundingbox  (84.00,88.92) rectangle +(10,5.08);
\definecolor{L}{rgb}{0,0,0}
\definecolor{F}{rgb}{0,0,0}
\path[line width=0.30mm, draw=L, fill=F] (89.87,90.05) circle (0.54mm);
\path[line width=0.30mm, draw=L] (95.00,95.00) -- (90.00,90.00);
\path[line width=0.30mm, draw=L] (85.00,95.00) -- (90.00,90.00);
\end{tikzpicture}\otimes1%
\right>=6\left<(f_1\otimes f_2)\cdot b^{\Gamma_2},\begin{tikzpicture}[scale=0.7,x=1.00mm, y=1.00mm, inner xsep=0pt, inner ysep=0pt, outer xsep=0pt, outer ysep=0pt]
\useasboundingbox  (84.00,88.92) rectangle +(10,5.08);
\definecolor{L}{rgb}{0,0,0}
\definecolor{F}{rgb}{0,0,0}
\path[line width=0.30mm, draw=L, fill=F] (89.87,90.05) circle (0.54mm);
\path[line width=0.30mm, draw=L] (95.00,95.00) -- (90.00,90.00);
\path[line width=0.30mm, draw=L] (85.00,95.00) -- (90.00,90.00);
\end{tikzpicture}\otimes1%
\right>
\]
and
\[
B^{(2)}_{\sst M}\left(\begin{tikzpicture}[scale=0.7,x=1.00mm, y=1.00mm, inner xsep=0pt, inner ysep=0pt, outer xsep=0pt, outer ysep=0pt]
\useasboundingbox  (83.00,88.92) rectangle +(14.67,5.08);
\definecolor{L}{rgb}{0,0,0}
\definecolor{F}{rgb}{0,0,0}
\path[line width=0.30mm, draw=L, fill=F, anchor=base] (89.87,90.05) circle (0.54mm);
\path[line width=0.30mm, draw=L] (95.00,95.00) -- (85.00,85.00);
\path[line width=0.30mm, draw=L] (85.00,95.00) -- (95.00,85.00);
\draw(87.00,84.00) node[anchor=base west]{\fontsize{11.23}{17.07}\selectfont $\footnotesize{f_1}$};
\end{tikzpicture},%
\begin{tikzpicture}[scale=0.7,x=1.00mm, y=1.00mm, inner xsep=0pt, inner ysep=0pt, outer xsep=0pt, outer ysep=0pt]
\useasboundingbox  (83.00,88.92) rectangle +(14.67,5.08);
\definecolor{L}{rgb}{0,0,0}
\definecolor{F}{rgb}{0,0,0}
\path[line width=0.30mm, draw=L, fill=F] (89.87,90.05) circle (0.54mm);
\path[line width=0.30mm, draw=L] (95.00,95.00) -- (90.00,90.00);
\path[line width=0.30mm, draw=L] (85.00,95.00) -- (90.00,90.00);
\draw(88.00,84.00) node[anchor=base west]{\fontsize{11.23}{17.07}\selectfont $f_3$};
\path[line width=0.30mm, draw=L] (87.00,91.00);
\path[line width=0.30mm, draw=L] (87.00,91.00) -- (89.00,93.00);
\path[line width=0.30mm, draw=L] (91.00,93.00) -- (93.00,91.00);
\end{tikzpicture}
\right)=6\left<B^{(2)}_{\sst M}\left(%
\begin{tikzpicture}[scale=0.7,x=1.00mm, y=1.00mm, inner xsep=0pt, inner ysep=0pt, outer xsep=0pt, outer ysep=0pt]
\useasboundingbox  (83.00,88.92) rectangle +(14.67,5.08);
\definecolor{L}{rgb}{0,0,0}
\definecolor{F}{rgb}{0,0,0}
\path[line width=0.30mm, draw=L, fill=F] (89.87,90.05) circle (0.54mm);
\path[line width=0.30mm, draw=L] (95.00,95.00) -- (90.00,90.00);
\path[line width=0.30mm, draw=L] (85.00,95.00) -- (90.00,90.00);
\draw(87.00,84.00) node[anchor=base west]{\fontsize{11.23}{17.07}\selectfont $\scriptsize{f_1}$};
\end{tikzpicture},%
\begin{tikzpicture}[scale=0.7,x=1.00mm, y=1.00mm, inner xsep=0pt, inner ysep=0pt, outer xsep=0pt, outer ysep=0pt]
\useasboundingbox  (83.00,88.92) rectangle +(14.67,5.08);
\definecolor{L}{rgb}{0,0,0}
\definecolor{F}{rgb}{0,0,0}
\path[line width=0.30mm, draw=L, fill=F] (89.87,90.05) circle (0.54mm);
\path[line width=0.30mm, draw=L] (95.00,95.00) -- (90.00,90.00);
\path[line width=0.30mm, draw=L] (85.00,95.00) -- (90.00,90.00);
\draw(88.00,84.00) node[anchor=base west]{\fontsize{11.23}{17.07}\selectfont $f_3$};
\path[line width=0.30mm, draw=L] (87.00,91.00);
\path[line width=0.30mm, draw=L] (87.00,91.00) -- (89.00,93.00);
\path[line width=0.30mm, draw=L] (91.00,93.00) -- (93.00,91.00);
\end{tikzpicture}
\right)(0),\begin{tikzpicture}[scale=0.7,x=1.00mm, y=1.00mm, inner xsep=0pt, inner ysep=0pt, outer xsep=0pt, outer ysep=0pt]
\useasboundingbox  (84.00,88.92) rectangle +(10,5.08);
\definecolor{L}{rgb}{0,0,0}
\definecolor{F}{rgb}{0,0,0}
\path[line width=0.30mm, draw=L, fill=F] (89.87,90.05) circle (0.54mm);
\path[line width=0.30mm, draw=L] (95.00,95.00) -- (90.00,90.00);
\path[line width=0.30mm, draw=L] (85.00,95.00) -- (90.00,90.00);
\end{tikzpicture}\otimes1%
\right>=6\left<(f_1\otimes f_3)\cdot b^{\Gamma_3},\begin{tikzpicture}[scale=0.7,x=1.00mm, y=1.00mm, inner xsep=0pt, inner ysep=0pt, outer xsep=0pt, outer ysep=0pt]
\useasboundingbox  (84.00,88.92) rectangle +(10,5.08);
\definecolor{L}{rgb}{0,0,0}
\definecolor{F}{rgb}{0,0,0}
\path[line width=0.30mm, draw=L, fill=F] (89.87,90.05) circle (0.54mm);
\path[line width=0.30mm, draw=L] (95.00,95.00) -- (90.00,90.00);
\path[line width=0.30mm, draw=L] (85.00,95.00) -- (90.00,90.00);
\end{tikzpicture}\otimes1%
\right>
\]
In the latter case there is a new graph appearing, namely
\[
\Gamma_3=\begin{tikzpicture}[scale=0.7,x=1.00mm, y=1.00mm, inner xsep=0pt, inner ysep=0pt, outer xsep=0pt, outer ysep=0pt]
\useasboundingbox (84.00,88) rectangle +(20,5.08);
\path[line width=0mm] (82.33,85.75) rectangle +(20.09,8.50);
\definecolor{L}{rgb}{0,0,0}
\definecolor{F}{rgb}{0,0,0}
\path[line width=0.30mm, draw=L, fill=F] (84.87,90.05) circle (0.54mm);
\path[line width=0.30mm, draw=L, fill=F] (99.87,90.05) circle (0.54mm);
\path[line width=0.30mm, draw=L] (85.00,90.00) .. controls (89.50,87.00) and (95.50,87.00) .. (100.00,90.00);
\path[line width=0.30mm, draw=L] (85.00,90.00) .. controls (89.50,93.00) and (95.50,93.00) .. (100.00,90.00);
\path[line width=0.30mm, draw=L] (93.00,92.90);
\path[line width=0.30mm, draw=L] (93.00,93.10) -- (91.60,91.30);
\path[line width=0.30mm, draw=L] (93.10,88.60) -- (91.70,86.80);
\end{tikzpicture}\,.
\]
Calculating $B_{\sst M}^{(2)}$ is now reduced to finding the residues: $\Res u^{\Gamma i}$, $i=1,2,3$. The (rather lengthy) computation can be found in section 7.2 of \cite{BDF}.

From the point of view of Kontsevich-Zagier periods, one gets some more interesting numbers in calculating higher orders of $B$. In particular, the wheel with three spokes appears as a contribution to
\[
B^{(4)}_{\sst M}\left( \begin{tikzpicture}[scale=0.7,x=1.00mm, y=1.00mm, inner xsep=0pt, inner ysep=0pt, outer xsep=0pt, outer ysep=0pt]
\useasboundingbox  (83.00,88.92) rectangle +(14.67,5.08);
\definecolor{L}{rgb}{0,0,0}
\definecolor{F}{rgb}{0,0,0}
\path[line width=0.30mm, draw=L, fill=F, anchor=base] (89.87,90.05) circle (0.54mm);
\path[line width=0.30mm, draw=L] (95.00,95.00) -- (85.00,85.00);
\path[line width=0.30mm, draw=L] (85.00,95.00) -- (95.00,85.00);
\draw(87.00,84.00) node[anchor=base west]{\fontsize{11.23}{17.07}\selectfont $\footnotesize{f_1}$};
\end{tikzpicture}^{\otimes 4}\right)=2^8\left<f_1^{\otimes 4}\,b^{\Gamma_4},\begin{tikzpicture}[scale=0.7,x=1.00mm, y=1.00mm, inner xsep=0pt, inner ysep=0pt, outer xsep=0pt, outer ysep=0pt]
\useasboundingbox  (86.00,88.92) rectangle +(7,5.08);
\definecolor{L}{rgb}{0,0,0}
\definecolor{F}{rgb}{0,0,0}
\path[line width=0.30mm, draw=L, fill=F] (89.96,89.99) circle (0.54mm);
\path[line width=0.30mm, draw=L] (90.00,95.00) -- (90.00,90.50);
\end{tikzpicture}^{\otimes 4}\right>+\dots\,,
\]
where
\vspace{-2ex} 
\[
\Gamma_4=\begin{tikzpicture}
[scale=0.5,x=1.00mm, y=1.00mm, inner xsep=0pt, inner ysep=0pt, outer xsep=0pt, outer ysep=0pt]
\useasboundingbox (65.92,90) rectangle +(28.17,28.68);
\definecolor{L}{rgb}{0,0,0}
\definecolor{F}{rgb}{0,0,0}
\path[line width=0.30mm, draw=L, fill=F] (79.87,104.05) circle (0.54mm);
\path[line width=0.30mm, draw=L] (80.00,92.00) circle (12.08mm);
\path[line width=0.30mm, draw=L] (80.00,92.00) circle (12.08mm);
\path[line width=0.30mm, draw=L, fill=F] (89.87,85.05) circle (0.54mm);
\path[line width=0.30mm, draw=L, fill=F] (69.87,85.05) circle (0.54mm);
\path[line width=0.30mm, draw=L, fill=F] (79.87,92.05) circle (0.54mm);
\path[line width=0.30mm, draw=L] (80.00,104.00);
\path[line width=0.30mm, draw=L] (80.00,104.00) -- (80.00,92.00);
\path[line width=0.30mm, draw=L] (70.00,85.00) -- (80.00,92.00);
\path[line width=0.30mm, draw=L] (80.00,92.00) -- (90.00,85.00);
\end{tikzpicture}
\]
and $b^{\Gamma_4}=\Res u^{\Gamma_4}$.
\end{example}

\section{Conclusion}
In this paper we reviewed some important algebraic structures appearing in perturbative Algebraic Quantum Field Theory (pAQFT) on Minkowski spacetime \cite{BDF} and we have shown how these relate to periods, usually investigated in a different context in Euclidean QFT in momentum space. The approach we advocate here provides a natural interpretation of these periods both in the mathematical and physical context. Mathematically, these correspond to distributional residues and are therefore intrinsic characterizations of scaling properties of certain class of distributions. Physically, they are relevant in computing the $\beta$-function. Note that, in our approach, the later characterization is independent of any regularization scheme. In fact, regularization is not needed at all and there is no need to recur to ill defined divergent expressions. Instead, the whole analysis is centered around the singularity structure of distributions that arise from taking powers of the Feynman propagator.

The main result of this paper is that distributional residues in pAQFT, corresponding to CK primitive graphs, are up to a factor that we compute, the same as Feynman periods in the CK framework (as conjectured in \cite{BDF}). The remaining EG primitive graphs, which are not CK primitive, also give rise to multiples of the same periods.

For the future research it would be worth investigating the distributional residues arising in pAQFT on other Lorentzian manifolds. Some interesting results have already been obtained for de Sitter spacetime in \cite{HdeSitter}. All the fundamental structures of pAQFT presented in this paper generalize easily to curved spacetimes. The only difference is the form of the Feynman propagator (or rather the ``Feynman-like'' propagator $H^{\mathrm{F}}$). The hope is that looking at more general propagators one would obtain a richer structure of residues and some new periods appearing, which are not present in the Minkowski spacetime context. It would be also interesting to investigate how these relate to motives.

\section*{Acknowledgments}
I would like to thank ICMAT (Madrid) for hospitality and financial support. The ideas presented in this paper were developed during my stay in Madrid in 2014 as part of the ``Research Trimester on Multiple Zeta Values, Multiple Polylogarithms, and Quantum Field Theory''.

\bibliographystyle{amsalpha}
\bibliography{References}

\end{document}